\newcommand{\ubar}[1]{\underline{#1}}
\newtheorem{theorem}{Theorem}
\newtheorem{lemma}[theorem]{Lemma}
\newtheorem{corollary}[theorem]{Corollary}
\newtheorem{example}[theorem]{Example}
\newtheorem{remark}[theorem]{Remark}
\DeclarePairedDelimiter\abs{\lvert}{\rvert}
\newcommand{\labeledunderbrace}[2]{
	\underset{#1}{\underbrace{#2}}
}
\def\p{\epsilon}
\def\pb{\bar{\p}}
\def\A{\ubar{A}}
\begin{document}

\title{Minimum Guesswork with an Unreliable Oracle}
\author{Natan Ardimanov, Ofer Shayevitz and Itzhak Tamo 
\thanks{The authors are with the Department of EE -- Systems, Tel Aviv University, Tel Aviv, Israel, emails \{natana@mail.tau.ac.il, ofersha@eng.tau.ac.il, zactamo@gmail.com\}. The work of I. Tamo and N. Ardimanov was supported by an ISF grant 1030/15 and an NSF-BSF grant 2015814. The work of O. Shayevitz was supported by an ERC grant no. 639573, ISF grant no. 1367/14, and ISF grant 1495/18.} \thanks{Parts of this work have been presented at the international symposium on information theory (ISIT) in Vail, CO, June 2018.}}
\date{}

\maketitle

\begin{abstract}
We study a guessing game where Alice holds a discrete random variable $X$, and Bob tries to sequentially guess its value. Before the game begins, Bob can obtain side-information about $X$ by asking an oracle, Carole, any binary question of his choosing. Carole's answer is however unreliable, and is incorrect with probability $\p$. We show that Bob should always ask Carole whether the index of $X$ is odd or even with respect to a descending order of probabilities -- this question simultaneously minimizes all the guessing moments for any value of $\p$. In particular, this result settles a conjecture of Burin and Shayevitz. We further consider a more general setup where Bob can ask a multiple-choice $M$-ary question, and then observe Carole's answer through a noisy channel. When the channel is completely symmetric, i.e., when Carole decides whether to lie regardless of Bob's question and has no preference when she lies, a similar question about the ordered index of $X$ (modulo $M$) is optimal. Interestingly however, the problem of testing whether a given question is optimal appears to be generally difficult in other symmetric channels. We provide supporting evidence for this difficulty, by showing that a core property required in our proofs becomes NP-hard to test in the general $M$-ary case. We establish this hardness result via a reduction from the problem of testing whether a system of modular difference disequations has a solution, which we prove to be NP-hard for $M\geq 3$. 

\end{abstract}

\section{Introduction and Main Result}\label{sec:intro}

Consider the classical guessing game played by Alice and Bob. Alice holds a discrete random variable (r.v.) $X$  distributed over $[N]\triangleq\{1,2,\ldots,N\}$ with a probability mass function  $p(x)$. Without loss of generality we assume below that the probabilities are in descending order, i.e., $p(1) \geq p(2) \geq\cdots\geq p(N)$. Bob would like to guess $X$ as quickly as possible. To that end, he is allowed to guess one symbol at a time, namely to ask Alice questions of the form ``is $X = x$''. Alice answers truthfully and the game terminates as soon as Bob guesses correctly. It is easy to check that to minimize his expected guessing time, Bob should guess the symbols in a decreasing order of probability. The minimal possible expected guessing time is therefore
\begin{align}
  G(X) \triangleq \sum_{k\in[N]} k\cdot p(k).
\end{align}
We note that $G(X)$ can be thought of as a measure of complexity of the random variable $X$. For example, in cryptographic settings, $G(X)$ measures the expected time for a brute-force attack on a random password with distribution $P_X$, and in communication, it can be related to the computational complexity of sequential decoding and the cut-off rate of the binary symmetric channel. As such, this quantity and its many variations have been extensively studied in the literature; see Section~\ref{sec:related_work} below for a short non-comprehensive survey.
%

In this paper, we consider the problem of guessing with the help of an unreliable oracle, recently introduced and studied by Burin and Shayevitz~\cite{burin2018reducing}. In this setup, before the game begins, Bob can reach out to an Oracle, Carole, who knows the true realization of $X$, and ask her any yes/no question that he believes would best help him in his subsequent guessing. This setup introduces a new dimension into the classical guessing game, as Bob is now allowed to actively gather some (noisy) information about the realization of $X$, incorporating his knowledge of the distribution. In a cryptographic setting, this can be thought of as a simplified toy model striving to capture the effect of actively collecting a limited amount of exogenous information that could help reduce a brute-force password attack (e.g., previous passwords, data from social media, etc.). In a communication setting, one can think of Alice and Carole as a single entity, and then interpret Carole's reply as a single information bit about $X$ that is sent over a binary symmetric channel to Bob, and then obtained by Alice/Carole via noiseless feedback; indeed, as shown in~~\cite{burin2018reducing}, allowing Bob to ask multiple binary questions is intimately related to the cut-off rate of the binary symmetric channels with feedback, and the question we ask here pertains to optimal feedback strategies when the decoder's performance is measured by its decoding time complexity.

Precisely, in our setup Bob can choose any subset $A\subseteq [N]$ and ask Carole whether $X$ is in $A$ or in $\bar{A} \triangleq [N]\setminus A$. Below we informally refer to the set $A$ as a {\em partition} (of $[N]$). Carole is known to lie with probability $\p$, i.e., Bob obtains the answer
	\begin{align}
		Y_A\triangleq \mathbbm{1}(X\in A)\oplus V,
	\end{align}
where $V\sim\text{Bernoulli}(\p)$ is independent of $X$. What is the best question, namely the best partition $A$, for Bob to choose so that given Carole's answer his expected guessing time would be minimized? If Carole is always truthful ($\p=0$), and recalling we have assumed that the symbols are ordered in a decreasing order of probabilities, it is simple to verify that Bob's best strategy is asking whether $X$ is even or odd, i.e., using  the partition $A_{\mathrm{ZZ}}=\{k\in[N] : k \text{ odd}\}$, which we refer to below as the {\em zigzag partition}. To see this, note first that given Carole's noiseless answer, the optimal guessing order is by decreasing order of probability over the symbols in either $A$ or $\bar A$. Thus, the guessing game will now terminate after $k$ steps with probability that is equal to the $k$th largest probability in $A$ plus the $k$th largest probability in $\bar A$. It is thus clear that to minimize the expected guessing time, Bob must make sure that symbols $1$ and $2$ are not both in $A$, and then more generally, that symbols $2k-1$ and $2k$ are not both in $A$. This property is guaranteed by the zigzag partition (as well as by $2^{\lceil N\rceil /2}$ other partitions).

What should the partition $A$ be in the general case? This turns out to be quite more difficult to determine, since the posterior distribution of $X$ given Carole's noise answer involves all the symbols, whose order is generally complicated. Let $G_A(X)$ be the optimal expected guessing time given Carole's noisy answer $Y_A$. In~\cite{burin2018reducing}, the authors reduced the problem of finding the partition that minimizes $G_A(X)$ to a max-cut problem in a certain weighted graph, and then used quadratic relaxation to prove that the zigzag partition is almost optimal, up to a small constant independent of the distribution and the alphabet size. 
\begin{theorem}[\cite{burin2018reducing}]
  For any r.v. $X$ and lying probability $\p$,
  \begin{align}\label{eq:Gbound}
    G_{A_{\mathrm{ZZ}}}(X) \leq \min_{A\subseteq[N]}G_A(X) + \frac{|1-2\p|}{4}.
  \end{align}
\end{theorem}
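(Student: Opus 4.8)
The plan is to convert $G_A(X)$ into an explicit combinatorial objective and reduce the claim to a near-optimality statement for the zigzag partition in a weighted \textsc{Max-Cut} instance. Since $|1-2\p|$ is invariant under $\p\mapsto 1-\p$, and replacing $A$ by $\bar A$ turns the channel with crossover $\p$ into the one with crossover $1-\p$ while $A_{\mathrm{ZZ}}$ remains a valid zigzag set, I may assume $\p\le\tfrac12$. Given Carole's answer $Y_A=y$, Bob's optimal guessing order is by decreasing posterior $\Pr(X=x\mid Y_A=y)$, so multiplying the conditional guessing time by $\Pr(Y_A=y)$ turns it into a sum over the sorted \emph{joint} probabilities $\Pr(X=x,Y_A=y)$. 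Applying the identity $\sum_k k\,a_{(k)}=\tfrac12\big[(n+1)\sum_k a_k-\sum_{i<j}|a_i-a_j|\big]$ (valid for nonnegative $a_1,\dots,a_n$ with $a_{(1)}\ge\cdots\ge a_{(n)}$) and using $\sum_{x,y}\Pr(X=x,Y_A=y)=1$ yields
\begin{align}
  G_A(X)=\frac{N+1}{2}-\frac12\sum_{y\in\{0,1\}}\sum_{x<x'}\big|\Pr(X=x,Y_A=y)-\Pr(X=x',Y_A=y)\big|.
\end{align}

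Next I would evaluate the inner sum pairwise. For $x<x'$ (hence $p(x)\ge p(x')$) and $\p\le\tfrac12$, a short case check shows the pair's total contribution is $p(x)-p(x')$ when $x,x'$ lie on the same side of $A$, and $\max\{p(x)-p(x'),\,(1-2\p)(p(x)+p(x'))\}$ when they lie on opposite sides. Subtracting off the $A$-independent part, minimizing $G_A(X)$ is equivalent to \emph{maximizing} the weighted cut $\mathrm{Cut}_g(A)\triangleq\sum_{\{x,x'\}\text{ cut by }A}g(x,x')$ on the complete graph $[N]$, with the nonnegative weights $g(x,x')=2\big[(1-\p)p(x')-\p\,p(x)\big]^+$ for $x<x'$. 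In particular $G_{A_{\mathrm{ZZ}}}(X)-\min_A G_A(X)=\tfrac12\big(\max_A\mathrm{Cut}_g(A)-\mathrm{Cut}_g(A_{\mathrm{ZZ}})\big)$, so the theorem becomes $\max_A\mathrm{Cut}_g(A)-\mathrm{Cut}_g(A_{\mathrm{ZZ}})\le\tfrac{1-2\p}{2}$. Writing $\tau_x\in\{\pm1\}$ for the XOR of the indicator of $A$ with the zigzag pattern $(-1)^x$, one checks that
\begin{align}
  \mathrm{Cut}_g(A)-\mathrm{Cut}_g(A_{\mathrm{ZZ}})=\sum_{x<x'}g(x,x')(-1)^{x'-x}\,\mathbbm{1}(\tau_x\ne\tau_{x'}),
\end{align}
i.e.\ the quantity to bound is the $g$-weight of the same-parity pairs separated by $\tau$ minus the $g$-weight of the different-parity pairs separated by $\tau$.

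The final step, which I expect to be the main obstacle, is to show this signed quantity never exceeds $\tfrac{1-2\p}{2}$ for \emph{every} $\tau\in\{\pm1\}^N$. The two structural facts to exploit are the monotonicity of the weights --- for $x<x'<x''$ one has $g(x,x'')\le g(x,x')$ and $g(x,x'')\le g(x',x'')$, so stretching a pair can only shrink its weight --- and the normalization, which gives $g(x,x')\le 2(1-2\p)p(x')$ together with $\sum_{x'}p(x')=1$. When $\tau$ has a single sign change at some position $j$, the quantity is the double alternating sum $\sum_{x\le j<x'}g(x,x')(-1)^{x'-x}$, which one controls because $g(x,\cdot)$ is nonincreasing and $g(\cdot,x')$ is nondecreasing, so alternating-series cancellation in each variable beats the crude bound $\sum_{x\le j<x'}g(x,x')$, and the leftover telescopes into leading terms $g(x,j+1)$ whose sum is governed by $\sum_{x'}p(x')=1$, producing the constant $\tfrac{1-2\p}{2}$. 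The genuine difficulty is an arbitrary $\tau$: its many sign changes make the family of separated pairs combinatorially intricate, and one must show extra sign changes cannot increase the signed quantity --- e.g.\ by reducing to extreme configurations via a convexity/exchange argument, or by partitioning $[N]$ into maximal monotone blocks of $\tau$ and telescoping the cross-block contributions. Equivalently, one may phrase the whole estimate spectrally: shifting $G=(g(x,x'))$ by the unique zero-trace diagonal $D$ that makes the zigzag vector $\sigma^{\mathrm{ZZ}}=((-1)^x)_x$ an eigenvector of $G+D$ (with eigenvalue $\lambda=(\sigma^{\mathrm{ZZ}})^{\!\top}G\,\sigma^{\mathrm{ZZ}}/N$), the target inequality is exactly $G+D-\lambda I\succeq-\tfrac{2(1-2\p)}{N}I$, and for test vectors $v$ the form splits as $v^{\!\top}(G+D-\lambda I)v=\sum_{\text{diff par}}g(x,x')(v_x+v_{x'})^2-\sum_{\text{same par}}g(x,x')(v_x-v_{x'})^2$; the crux is then to dominate the negative (same-parity) terms by the positive (different-parity) ones along the chains $x,x+1,\dots,x'$, once more using the monotonicity of $g$.
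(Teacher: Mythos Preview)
The paper does not itself prove this theorem; it is quoted from \cite{burin2018reducing}, and the only description given here of that proof is ``reduced the problem \ldots\ to a max-cut problem in a certain weighted graph, and then used quadratic relaxation.'' In fact the present paper supersedes the statement entirely: Theorem~\ref{thrm:main} gives $G_{A_{\mathrm{ZZ}}}(X)=\min_A G_A(X)$, so the additive $\tfrac{|1-2\p|}{4}$ is unnecessary. Thus the paper's ``own proof'' of this inequality is the proof of Theorem~\ref{thrm:main} (via the relaxed two-guess game and the posterior-respecting argument), which is an entirely different route from yours.

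Your reduction is correct and matches the cited max-cut approach: the identity for $\sum_k k\,a_{(k)}$, the pairwise evaluation of $\sum_y|\Pr(x,y)-\Pr(x',y)|$, and the resulting weights $g(x,x')=2[(1-\p)p(x')-\p p(x)]^+$ are exactly the ingredients of \cite{burin2018reducing}. Your spectral reformulation at the end (shift by a diagonal so that the zigzag vector becomes an eigenvector, then bound the smallest eigenvalue) is precisely the ``quadratic relaxation'' alluded to in the paper.

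The genuine gap is that you do not actually carry out the final bound. You correctly isolate the quantity $\sum_{x<x'} g(x,x')(-1)^{x'-x}\mathbbm{1}(\tau_x\neq\tau_{x'})$ and note the monotonicity $g(x,x'')\le\min\{g(x,x'),g(x',x'')\}$, but the alternating-series argument you sketch for a single sign change, and the ``convexity/exchange'' or ``block telescoping'' suggestions for general $\tau$, are not worked out. These are exactly the steps where the combinatorics is delicate, and your proposal stops at the point where the real work begins. If you want to finish along these lines, the cleanest route is the quadratic/SDP one you mention last: exhibit an explicit diagonal shift $D$ for which $G+D-\lambda I$ has the required lower bound, rather than trying to control all $\tau$ combinatorially.
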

In addition, it was conjectured in~\cite{burin2018reducing} that the excess term in~\eqref{eq:Gbound} is an artifact of the proof, and that zigzag is in fact exactly optimal. In this paper, we prove this conjecture in a stronger sense, using an entirely different technique. For any function $f:[N]\to \mathbb{R}$, let $G^f_A(X)$ be the minimal expected value of $f(\textit{guessing time})$ given $Y_A$. We show:
\begin{theorem}\label{thrm:main}
  For any r.v. $X$, lying probability $\p$, and nondecreasing function $f:[N]\to \mathbb{R}$, 
  \begin{align}
    G^f_{A_{\mathrm{ZZ}}}(X) = \min_{A\subseteq [N]}G^f_A(X).
  \end{align}
\end{theorem}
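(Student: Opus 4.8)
The plan is to show that the zigzag partition $A_{\mathrm{ZZ}}$ minimizes every guessing moment simultaneously, via a direct pointwise comparison of the sorted posterior weight sequences. The key observation is that after Carole's noisy answer $Y_A = y$, Bob's optimal expected cost of $f(\text{guessing time})$ is a Schur-type functional of the posterior distribution: if we sort the posterior probabilities in decreasing order as $q_1 \ge q_2 \ge \cdots$, then the conditional cost is $\sum_k q_k\, \Delta f(k)$ where $\Delta f(k) = f(k) - f(k-1) \ge 0$ (with $f(0)=0$, say). Averaging over $Y_A$, the unconditional cost $G^f_A(X)$ becomes $\sum_k \Delta f(k) \cdot w_k(A)$, where $w_k(A)$ is the sum, over the two values $y\in\{0,1\}$, of $\Pr(Y_A=y)$ times the $k$-th largest posterior probability under $Y_A=y$ — equivalently, $w_k(A)$ is the sum of the $k$-th largest joint weights $\Pr(X=x, Y_A=y)$ within each ``column'' $y$. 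So it suffices to prove the \emph{single} combinatorial/majorization statement: for every $k$, the partial sum $\sum_{j\le k} w_j(A_{\mathrm{ZZ}}) \le \sum_{j\le k} w_j(A)$ for all $A$; since $\Delta f \ge 0$ is an arbitrary nonnegative sequence, this partial-sum domination is exactly equivalent to $G^f_{A_{\mathrm{ZZ}}}(X) \le G^f_A(X)$ for all nondecreasing $f$, and is also necessary.

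**Next**, I would make the joint weights explicit. For a partition $A$, the joint weights are $p(x)(1-\p)$ if $x$ is ``on the queried side'' of its column and $p(x)\p$ otherwise; concretely, in column $Y_A=1$ the weights are $\{p(x)(1-\p) : x\in A\} \cup \{p(x)\p : x\in\bar A\}$, and symmetrically in column $Y_A=0$. Thus $\sum_{j\le k} w_j(A)$ is the maximum, over choices of how many elements to take from each column, of a sum of $k$ values from each of the two multisets above. The heart of the argument is then a \textbf{rearrangement/exchange inequality}: among all ways of assigning each symbol $x\in[N]$ to one of the two columns, and within each column a ``scaling'' by either $1-\p$ or $\p$ according to membership in $A$, the top-$k$ partial sums are minimized when no two ``heavy'' symbols $2j-1, 2j$ land with the large coefficient $1-\p$ in the same column — which is precisely what zigzag achieves (it puts odd-indexed symbols in $A$ and even-indexed in $\bar A$, so in column $Y_A=1$ the large-coefficient symbols are $1,3,5,\dots$ and in column $Y_A=0$ they are $2,4,6,\dots$). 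I expect the cleanest route is an exchange argument: take any optimal-for-$A$ configuration, and show that any ``collision'' of two consecutive heavy symbols in one column can be resolved by a swap that does not increase any top-$k$ partial sum, driving the configuration toward zigzag.

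**The main obstacle** I anticipate is handling the interaction between the two coefficient scales and the sorting when $\p$ is close to $1/2$ (where the ordering of $p(x)(1-\p)$ versus $p(x')\p$ for different $x, x'$ becomes delicate) and the bookkeeping of \emph{which} $k$-subsets are selected as the top-$k$ in each column as the swap is performed — a swap that helps one value of $k$ could a priori hurt another, so the exchange must be shown to be monotone in $k$ simultaneously. I would address this by phrasing the claim in terms of the full decreasing rearrangement and invoking a majorization-style lemma: if two nonnegative sequences $a, b$ satisfy that $a$'s sorted partial sums dominate those obtainable from any ``legal'' rearrangement, then the exchange preserves this. A secondary technical point is degenerate cases (ties in $p(\cdot)$, odd $N$, $\p=0$ or $\p=1/2$), which should be routine but need to be checked so the sorting and the exchange steps are well-defined. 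Once the partial-sum domination $\sum_{j\le k} w_j(A_{\mathrm{ZZ}}) \le \sum_{j\le k} w_j(A)$ is established for all $k$ and all $A$, the theorem follows immediately by summing against $\Delta f \ge 0$, and the matching lower bound (optimality being exact, not just up to a constant) comes for free since it holds for every nondecreasing $f$, in particular $f(k)=k$, recovering and strengthening the Burin--Shayevitz conjecture.
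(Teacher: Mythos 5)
Your reduction is essentially the paper's first step in different clothing: writing $G^f_A(X)=\sum_k f(k)\,w_k(A)$ with $w_k(A)=\pi^0_A(k)+\pi^1_A(k)$ and summing by parts, the claim ``zigzag minimizes $G^f_A$ for every nondecreasing $f$'' is equivalent to the partial-sum domination $\sum_{j\le k}w_j(A_{\mathrm{ZZ}})\ \ge\ \sum_{j\le k}w_j(A)$ for all $k$ and $A$ (note the direction: you want the \emph{large} weights at small guessing times, so zigzag's top-$k$ sums must \emph{dominate}; your $\le$ is backwards, and the identity ``conditional cost $=\sum_k q_k\,\Delta f(k)$'' should be a tail-sum, but these are repairable slips). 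The paper proves the domination in its strongest possible form: for any $A$, $\sum_{j\le k}w_j(A)$ is at most the sum of the $2k$ globally largest elements of the multiset $\Pi=\{\p p(i)\}\cup\{\pb p(i)\}$, and the zigzag attains this bound with equality for every $k$ simultaneously.

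The genuine gap is that you never establish this last fact; you correctly identify the obstacle (a single exchange must not hurt any of the $N$ partial sums at once, and the interleaving of $\p$-scaled and $\pb$-scaled terms is delicate for $\p$ near $1/2$) but leave it unresolved, and your proposed invariant --- ``no two heavy symbols $2j-1,2j$ carry the coefficient $1-\p$ in the same column'' --- is the right condition only in the near-noiseless regime. For general $\p$ the pairs that must be separated are the \emph{consecutive pairs of the global descending order} $\sigma^\downarrow$ on all $2N$ posterior terms, and these pairs depend on the distribution and on $\p$ (they can be $\{\p p(i),\p p(j)\}$, $\{\pb p(i),\pb p(j)\}$, or mixed). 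The paper's resolution (Lemma~\ref{lemma_azz}) is a short parity count that your proposal is missing: for a consecutive pair $\{\p p(i),\pb p(j)\}$ in $\sigma^\downarrow$, the number of strictly larger posterior terms is exactly $(i-1)+(j-1)=i+j-2$, which must be even since the pair occupies positions $2k-1,2k$; hence $i\equiv j\pmod 2$ and the zigzag posterior sets~\eqref{eq:zz_posterior1}--\eqref{eq:zz_posterior2} separate the pair (the same-coefficient cases force $|i-j|=1$ and are separated for the opposite parity reason). Combined with the $2$-colorability of the graph $\mathcal{G}_{\sigma^\downarrow}$ (Lemmas~\ref{lemma_pi_part}--\ref{lemma_graph}), which guarantees that separation of consecutive pairs by \emph{some} partition is actually realizable as a question $A$, this closes the argument. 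Without this counting step (or a worked-out exchange argument that is simultaneously monotone in $k$), your outline does not yet constitute a proof.
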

Theorem~\ref{thrm:main}, proved in Sections~\ref{sec:lem} and~\ref{sec:main}, implies in particular that for any r.v.~$X$, the zigzag guessing strategy simultaneously minimizes all the positive guessing time moments for any noise level $\p$, and more specifically the conjecture in~\cite{burin2018reducing} follows by setting $f(k)=k$. It should be noted that the method of~\cite{burin2018reducing} cannot be extended to non-linear functions $f$, as the max-cut relaxation strongly hinges on the linearity of the expected guessing time. Our approach in this paper is based on a markedly different relaxation. Suppose that Bob wants to guess both $X$ and the noise $V$ \textit{simultaneously}, without access to any information, but is allowed to make \textit{two guesses} in each round. It is not difficult to see that any guessing strategy for the original problem can be converted to a guessing strategy for the relaxed problem with the exact same expected number of rounds; indeed, since $V$ is a function of $X$ and $Y_A$, at the $k$th round of the relaxed game Bob can simply submit his $k$th guess for $(X,V)$ under both $Y_A=0$ and $Y_A=1$. Therefore, the expected number of rounds in the relaxed game yields a lower bound on $G_A(X)$ for \textit{any} choice of $A$ (and a similar claim applies to $G_A^f(X)$). Surprisingly, we show that this lower bound is always achievable. Furthermore, there can generally be many optimal partitions that achieve this lower bound, and we concisely characterize and count all of them using graph-theoretic quantities. We then show that the zigzag partition is universal, in the sense that it is always a member of the set of optimal solutions. We note that given this result, it is straightforward to compute $\min_{A\subseteq [N]}G^f_A(X)$ in $O(N\log{N})$ time (e.g. using eq.~\eqref{eq:posterior_terms_eq}, or better yet, using Lemma~\ref{lem:gopt}). However, due to the combinatorial nature of the problem, it seems difficult to provide a tight analytical characterization of this quantity. Nevertheless, in Remark~\ref{rem:H_bound_Mary} we provide a lower bound as a function of the entropies of $V$ and $X$.

In Section~\ref{sec:multiple_choice}, we further discuss a natural extension of our setup to the case where Bob can ask a multiple-choice $M$-ary question, and where Carole's lies are modeled by a general modulo-additive channel. In this case,  a lower bound obtained via a relaxation similar to the binary case is not always tight, and in fact a (generalized) zigzag is not even necessarily optimal, unless the channel is fully symmetric, i.e., unless Carole has no preference when she lies. Moreover, our proof techniques do not directly extend to this case in a very strong sense: we show that testing for the validity of a core property required by our approach is NP-hard; this is established via a reduction from the problem of testing whether a system of modular difference disequations has a solution, which we prove to be NP-hard for $M\geq 3$. In Section~\ref{sec:discussion} we conclude with a discussion of some interesting directions trying to establish if testing whether the relaxation lower bound is tight can be done in polynomial-time in the alphabet size $N$. We further provide a polynomial-time algorithm for finding the optimal partition in the asymmetric binary case, for sufficiently small lying probabilities. 


\section{Related Work}\label{sec:related_work}
The problem of determining the value of a discrete r.v. $X$ by asking general binary questions is well studied in information theory, dating back to Shannon~\cite{shannon1948mathematical} and Huffman~\cite{huffman1952method}. It is well known that the Shannon entropy $H(X)$ is essentially the minimal number of questions required on average to describe a single realization of $X$, and is the exact limiting number of questions per instance (with high probability) required to describe multiple i.i.d. realizations of $X$. More recently, Massey~\cite{massey1994guessing} introduced a different notion of r.v. complexity, corresponding to the minimal number of guesses required on average in order to determine the value of $X$, referred to here as $G(X)$. Massey used the extremal maximum entropy properties of the geometric distribution to show that $G(X)$ grows at least exponentially with $H(X)$:{
\begin{align}\label{eq:massey_lb}
G(X) \geq \frac{2^{H(X)}}{\kappa}+1,
\end{align}
where $e + o(1) \leq \kappa \leq 4 $, where the $o(1)$ term vanishes as $G(X)$ grows large.} This bound it tight (with $\kappa = e)$ for geometrically distributed $X$. In a follow-up work, Arikan~\cite{arikan1996inequality} defined the notion of conditional guessing, and provided general lower and upper bounds on the $\rho$-th moment of the conditional guessing time of $X$ given some side-information $Y$, relating them to $H_{\frac{1}{1+\rho}}(X\mid Y)$, the Arimoto-R\'{e}nyi conditional entropy of order $\frac{1}{1+\rho}$. Arikan's upper bound without the conditioning was later tightened by Bozta{\c{s}}~\cite{boztas1997comments} for integer moments. In particular, when evaluated for a bivariate i.i.d. sequence $\{(X_k,Y_k)\}_{k=1}^n\stackrel{\mathrm{i.i.d}}{\sim} P_{XY}$ and $\rho=1$, Arikan's bounds imply that 
\begin{align*}
\lim_{n\to\infty}\frac{1}{n}\log{G(X^n\mid Y^n)} = H_{1/2}(X\mid Y), 
\end{align*}
with a similar result for general $\rho$. Continuing his previous work on the cutoff rate of single-user sequential decoding~\cite{arikan1988upper}, Arikan used the conditional guessing moment bounds to determine the cutoff rate of sequential decoding in multiple-access channels. We note that the problem consider in this paper can also be thought of as conditional guessing with side-information given by Carole's noisy answer; however, in the classical works on guessing with side information the distribution $P_{Y|X}$ is fixed, and one is interested in the behavior of the conditional guessing moments in the limit of multiple i.i.d. samples, whereas in our setup the main difficulty is to determine the best question Bob should ask Carole, which amounts to picking the best possible side-information from a given collection of options.

Arikan's work has been extended in many directions. Arikan and Merhav~\cite{arikan1998guessing} considered the case of guessing a possibly continuous r.v., where Bob's guess is considered correct if it is close enough to the true value w.r.t.~some distortion measure. They derived a single-letter variational expression for the exponent of the guessing moment as a function of the distortion level. The same authors then extended the discussion to a joint source-channel coding setup with a guessing decoder~\cite{arikan1998joint}, and to the wiretap channel setting with a guessing wiretapper~\cite{merhav1999shannon}. Arikan and Bozta{\c{s}} considered a one-sided lying variation of the guessing game~\cite{arikan2002guessing}, where Alice lies with some probability when she rejects Bob's guesses, but never lies when Bob guesses correctly. Sundaresan~\cite{sundaresan2007guessing} studied the case of universal guessing, where the distribution of $X$ is only known to belong to some family of distributions, and determined the associated penalty (redundancy) in the guessing exponent incurred by this uncertainty. When specialized to the case of an i.i.d.~distribution with an unknown marginal, his general results indicate that the redundancy term vanishes asymptotically, a fact that was already observed by Arikan and Merhav~\cite{arikan1998guessing}. Massey's guessing game has inspired a myriad of other works, tackling various other guessing setups and relations between guessing moments, entropy and compression, see e.g. ~\cite{pfister2004renyi,malone2004guesswork,yamamoto2011channel,hanawal2011guessing,sason2018improved,sason2018tight}, problems of guessing with a helper~\cite{Graczyk_Lapidoth,gr:17:thesis,weinberger2020guessing}, various universal guessing setups~\cite{sundaresan2006guessing,merhav2019universal}, and multi-agent guessing~\cite{christiansen2015multi,salamatian2017centralized}, as well as discussing the implications and applications of guessing in cryptographic settings, see e.g. ~\cite{arikan2008guessing,malone2012investigating,christiansen2013guessing,bracher2015guessing,yona2017effect}, among many others.

The guessing game considered in this paper allows Bob, the guesser, to ask a single general binary question to which he obtains a possibly incorrect answer, before proceeding with symbol-by-symbol guessing. The first phase of our setup is thus reminiscent of another game, known as the {\em R\'{e}nyi-Ulam game}~\cite{renyi1961problem,ulam1978adventures}. In this game, Bob is allowed to ask Carole multiple general binary questions (adaptively), to which he obtains possibly incorrect answers, and his goal is to identify $X$. The classical version of this game is adversarial: The number of questions Bob can ask as well as the maximum number of lies Carole can tell are given, and Bob needs to find $X$ with certainty (so there is no need to assume a distribution on $X$, only the cardinality $N$ matters). The problem is to determine, for a given set of parameters, whether Bob has a strategy to always win the game, see~\cite{pelc2002searching} for a comprehensive survey. In his PhD thesis, Berlekamp~\cite{berlekamp1964block} studied the properties of winnable games from the equivalent perspective of error correction with noiseless feedback. Specifically, he provided bounds on the asymptotic version of this problem, where the cardinality of $X$ grows exponentially as $N=2^{nR}$ and the maximum number of lies grows linearly as $n\p$, and where $n$ is the total number of questions. Berlekamp's bounds together with a result by Zigangirov~\cite{zigangirov1976number} provide a complete characterization of the relation between $\p$ and $R$, unlike the case where Bob needs to decide on his questions in advance, which is equivalent to the problem of finding the maximum growth rate of a binary error correcting code with minimum distance that scales linearly with the block length, a notorious open problem in coding theory.

In our setup the lies are random; the version of the R\'{e}nyi-Ulam game in which Carole lies with probability $\p$ and Bob needs to determine $X$ with high probability given Carole's answers, can essentially be thought of as the standard channel coding with noiseless feedback over a binary symmetric channel with crossover probability $\p$~\cite{horstein1963sequential,shayevitz2011optimal}. Going back to our guessing game, if we allow Bob to ask Carole multiple questions before he starts guessing $X$, then our setup can in fact be viewed as channel coding with noiseless feedback where instead of a small error probability we are interested in a small expected guessing time at the decoder. This problem is closely related to that of the {\em cutoff rate} of the binary symmetric channel with feedback~\cite{gallager1968information,arikan1988upper}, and a repeated zigzag partition has been used in~\cite{burin2018reducing} to provide a simple recursive scheme in the posterior matching spirit~\cite{shayevitz2011optimal} that attains this cutoff rate.

\section{Definitions and Basic Lemmas}\label{sec:lem}
Let us proceed more rigorously. We assume throughout without loss of generality that $0<\p<1/2$. First, recall the following well-known rearrangement lemma~\cite{hardy1952inequalities}. 
\begin{lemma}\label{lem:rearrange}
  Let $a_1,\ldots,a_N$ be a sequence of real numbers, and let $a_1^{\downarrow},\ldots, a_N^{\downarrow}$ be the same sequence ordered in descending order. Then 
  \begin{align}
    \sum_{k\in[N]}f(k)\cdot a_k^{\downarrow} \leq \sum_{k\in[N]}f(k)\cdot a_k,
  \end{align}
for any nondecreasing function $f: [N]\to \mathbb{R}$. 
\end{lemma}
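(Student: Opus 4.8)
The plan is to reduce the weighted inequality to a family of elementary ``suffix-sum'' comparisons via summation by parts, using only that $f$ being nondecreasing makes its forward differences nonnegative.

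First I would note that since the $a_k$ are arbitrary reals, $(a_k)_{k\in[N]}$ is merely a permutation of $(a_k^{\downarrow})_{k\in[N]}$; in particular the two sequences have the same total sum, $\sum_{k\in[N]}a_k=\sum_{k\in[N]}a_k^{\downarrow}$. Writing $\Delta_m\triangleq f(m)-f(m-1)\ge 0$ for $m=2,\ldots,N$, so that $f(k)=f(1)+\sum_{m=2}^{k}\Delta_m$, and substituting this into either side and swapping the order of summation, one gets for any sequence $(b_k)_{k\in[N]}$ the Abel-type identity
\begin{align*}
\sum_{k\in[N]}f(k)\,b_k \;=\; f(1)\sum_{k\in[N]}b_k \;+\; \sum_{m=2}^{N}\Delta_m\sum_{k=m}^{N}b_k.
\end{align*}
Applying this with $b_k=a_k$ and with $b_k=a_k^{\downarrow}$ and subtracting, the $f(1)$-terms cancel because the total sums agree, so the claim reduces to showing $\sum_{k=m}^{N}a_k^{\downarrow}\le\sum_{k=m}^{N}a_k$ for every $m\in\{2,\ldots,N\}$, each such inequality being weighted by the nonnegative coefficient $\Delta_m$.

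This last step is immediate: $\sum_{k=m}^{N}a_k^{\downarrow}$ is the sum of the $N-m+1$ \emph{smallest} elements of the multiset $\{a_1,\ldots,a_N\}$, whereas $\sum_{k=m}^{N}a_k$ is the sum of \emph{some} $N-m+1$ of those elements, and the former is the minimum possible value of the latter over all choices of $N-m+1$ elements. An alternative I would keep in reserve is a direct exchange argument: pick a permutation $\sigma$ minimizing $\sum_{k}f(k)\,a^{\downarrow}_{\sigma(k)}$ and, among such minimizers, one with the fewest inversions; if $\sigma$ were not the identity there would be an index $i$ with $a^{\downarrow}_{\sigma(i)}<a^{\downarrow}_{\sigma(i+1)}$, and transposing $\sigma(i),\sigma(i+1)$ would change the objective by $(f(i)-f(i+1))\big(a^{\downarrow}_{\sigma(i+1)}-a^{\downarrow}_{\sigma(i)}\big)\le 0$ while strictly lowering the inversion count, a contradiction.

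Since this is a textbook rearrangement fact, I do not expect any genuine obstacle; the only point requiring a little care is the bookkeeping in the summation by parts together with the observation that $(a_k)$ and $(a_k^{\downarrow})$ have equal total sums, which is exactly what lets the $f(1)$ term — and hence any possible negativity of $f$ — drop out without consequence.
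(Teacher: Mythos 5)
Your proof is correct, and your primary argument is genuinely different from the paper's. The paper proves the lemma by a bare exchange argument: if $(a_k)$ is not in descending order, find an out-of-order pair, swap it (which does not increase the weighted sum since the change equals $(f(i)-f(j))(a_j-a_i)\le 0$), and iterate to termination. Your main route instead uses summation by parts: writing $f(k)=f(1)+\sum_{m=2}^k \Delta_m$ with $\Delta_m\ge 0$ and swapping the order of summation reduces the claim to the suffix-sum inequalities $\sum_{k=m}^N a_k^{\downarrow}\le \sum_{k=m}^N a_k$, each of which holds because the sorted suffix is the sum of the $N-m+1$ smallest elements. This buys you a fully explicit, termination-free argument in which the role of monotonicity of $f$ (nonnegativity of the $\Delta_m$) and the cancellation of the $f(1)$ term are transparent; it also immediately generalizes to majorization-type statements. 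The paper's exchange proof is shorter but is slightly loose on two points your write-up handles cleanly: the swap only \emph{weakly} decreases the sum (the paper says ``clearly reduces''), and termination of the iteration deserves a word (your reserve argument supplies exactly this via the minimal-inversion-count selection). Either of your two arguments would be acceptable; the reserve one is essentially the paper's proof done carefully.
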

\begin{proof}
  If $\{a_k\}$ is in  descending order, we are done. Otherwise, there must exist a pair $a_i>a_j$ for $i>j$. Switching between them clearly reduces the sum. Iterating this procedure, we terminate at a descending order after a finite number of iterations.  
\end{proof}

In light of Lemma~\ref{lem:rearrange}, it should be clear that for any partition $A$ and any non-decreasing function $f$, Bob's optimal guessing strategy in terms of minimizing the expectation of $f$ applied to his guessing time given $Y_A$, is to guess in decreasing order of posterior probabilities. In fact, this strategy is optimal simultaneously for all non-decreasing functions $f$, and specifically minimizes all the positive guessing time moments (we note that this simple fact has been already observed by Arikan, and has also been used in one-shot lossless source coding problems, see e.g. ~\cite{szpankowski2011minimum,kosut2017asymptotics}). Therefore, the minimal expected value of $f$ given $Y_A$ is  
\begin{align}
  G^f_A(X) = \mathbb{E} \left(\sum_{k\in[N]} f(k)\cdot P_{X|Y_A}(x_k^{Y_A} \mid Y_A)\right), 
\end{align}
where $\{x_k^y\}_{k\in[N]}$ is a permutation of $[N]$ that pertains to the {\em posterior order given $Y_A=y$}, i.e., such that 
\begin{align}
  P_{X|Y_A}(x_k^y\mid y) \geq  P_{X|Y_A}(x_{k+1}^y\mid y).
\end{align}

We are interested in studying the optimal partition, i.e., one that minimizes $G^f_A(X)$ over $A\subseteq [N]$.   
Writing $\pb = 1-\p$, let us expand the expression for $G_A^f(X)$:
{\renewcommand{\arraystretch}{1.2}%
\begin{align}
    G^f_A(X) &= \sum_{y\in\{0,1\}} \sum_{k\in[N]} f(k)\cdot P_{X|Y_A}(x_k^y\mid y)\cdot P_{Y_A}(y)\\
    &= \sum_{y\in\{0,1\}} \sum_{k\in[N]} f(k)\cdot P_{X,Y_A}(x_k^{y},y)\\
    &= \sum_{k\in[N]} f(k)\cdot \left[P_{Y_A|X}(0|x_k^0)p(x_k^0) + P_{Y_A|X}(1|x_k^1)p(x_k^1)\right]\\
    &= \sum_{k\in[N]}f(k)\cdot 
    \left\{
      \begin{array}{lr}
        \p\,  p(x_k^0) + \pb\,  p(x_k^1) & x_k^0\in A, x_k^1\in A\\
        \p\,  p(x_k^0) + \p\,  p(x_k^1) & x_k^0\in A, x_k^1\in \bar{A}\\
        \pb\,  p(x_k^0) + \pb\,  p(x_k^1) & x_k^0\in \bar{A}, x_k^1\in A\\
        \pb\,  p(x_k^0) + \p\,  p(x_k^1) & x_k^0\in \bar{A}, x_k^1\in \bar{A}, 
      \end{array}
\right. \label{eq:posterior_terms_eq}
\end{align}}
where the $\p,\pb$ in~\eqref{eq:posterior_terms_eq} terms follow immediately from the definition of how Carole lies. The above sum includes the $2N$ terms 
\begin{align}
\Pi \triangleq \{\p p(k)\}_{k\in[N]} \cup \{\pb p(k)\}_{k\in[N]}, 
\end{align}

which we refer to below as \textit{posterior terms}. We emphasize that in general, some posterior terms can have the same numerical value, in which case one needs to think of $\Pi$ as a \textit{multiset}; in fact, it is perhaps more instructive to think of the $2N$ posterior terms in $\Pi$ as distinct objects that are merely associated with numerical values. Nevertheless, for brevity of exposition we will assume throughout our derivations, unless otherwise stated, that all the posterior terms in $\Pi$ have distinct numerical values. This incurs no loss of generality, since if this is not the case then we can always consider an arbitrarily small perturbation of the distribution that satisfies this, see~\cite[Lemma 1]{burin2018reducing} for details. All the claims made in the paper are valid verbatim or with the trivial adaptations in the case of non-distinct posterior terms.

In~\eqref{eq:posterior_terms_eq}, the posterior terms are divided into $N$ pairs, where each pair consists of probabilities multiplied by either $\p$ or $\pb$, and where the set $A$ determines the assignment of probabilities to their multipliers. This assignment cannot be arbitrary in general, and is constrained by the possible choices of $A$. If one forgets about $A$ and allows an unconstrained assignment, then by Lemma~\ref{lem:rearrange} the minimum is obtained when all $2N$ posterior terms are first ordered in decreasing order, and then divided into contiguous pairs according to this order. This choice of pairs might not be consistent however with any partition $A$, and we therefore refer to the resulting sum as the \textit{unconstrained minimum}. Note that this unconstrained minimum precisely corresponds to the optimal strategy for the relaxed game discussion in Section~\ref{sec:intro}, where Bob tries to guess both $X$ and the noise $V$ in two-guesses rounds. The remainder of the paper is mostly concerned with the various properties of possible pairs and their relation to the partition $A$. Specifically, it will be shown that there are in general multiple choices of $A$ that attain the unconstrained minimum, and that the zigzag partition $A_{\mathrm{ZZ}}$ is always one of them.

In what follows, a pair of posterior terms of the form $(\p p(k), \pb p(k))$ is referred to as {\em posterior-siblings}. Note that the set $\Pi$ can be naturally written as a disjoint union of two {\em posterior sets} $\Pi = \Pi^0_A\cup \Pi^1_A$, where 
\begin{align}
	\label{post_sets_def}
	\Pi^y_A &\triangleq \left\{P_{X,Y_A}(x_k^{y},y)\right\}_{k\in[N]},
\end{align}
collects the posterior terms corresponding to answer $y$ by Carole. The following simple fact relates the posterior-siblings to posterior sets induced by a partition.
\begin{lemma}\label{lemma_pi_part}
The posterior sets separate all the posterior-siblings, i.e., they never both belong to the same posterior set $\Pi^y_A$. Conversely, for any partition of $\Pi = \Pi^0\cup \Pi^1$ that separates all the posterior-siblings, there exists a unique partition $A$ such that $\Pi_A^0=\Pi^0$ and $\Pi_A^1=\Pi^1$. 
\end{lemma}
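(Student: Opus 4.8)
The plan is to make the posterior sets $\Pi^0_A$ and $\Pi^1_A$ completely explicit and then read off both implications. Since $Y_A=\mathbbm{1}(X\in A)\oplus V$ with $V\sim\text{Bernoulli}(\p)$ independent of $X$, the conditional law satisfies $P_{Y_A\mid X}(1\mid x)=\pb$ and $P_{Y_A\mid X}(0\mid x)=\p$ when $x\in A$, with the two values swapped when $x\in\bar A$. As $\{x_k^y\}_{k\in[N]}$ is merely a permutation of $[N]$, the ordering in~\eqref{post_sets_def} is irrelevant to the set itself, so
\begin{align*}
\Pi^0_A &= \{\p\,p(k):k\in A\}\cup\{\pb\,p(k):k\in\bar A\},\\
\Pi^1_A &= \{\pb\,p(k):k\in A\}\cup\{\p\,p(k):k\in\bar A\}.
\end{align*}
From this identity the forward direction is immediate: for every $k\in[N]$ the posterior-sibling pair $\{\p\,p(k),\pb\,p(k)\}$ has one element in $\Pi^0_A$ and the other in $\Pi^1_A$ --- namely $\p\,p(k)\in\Pi^0_A$ and $\pb\,p(k)\in\Pi^1_A$ precisely when $k\in A$, and the reverse when $k\in\bar A$ --- so siblings never share a posterior set.

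For the converse, given any partition $\Pi=\Pi^0\cup\Pi^1$ that separates all posterior-siblings, I would define $A\triangleq\{k\in[N]:\p\,p(k)\in\Pi^0\}$ and check directly that it does the job. For $k\in A$, separation forces $\pb\,p(k)\in\Pi^1$; for $k\in\bar A$, separation forces $\p\,p(k)\in\Pi^1$ and hence $\pb\,p(k)\in\Pi^0$. Matching these memberships against the explicit formulas for $\Pi^0_A$ and $\Pi^1_A$ above yields $\Pi^0_A=\Pi^0$ and $\Pi^1_A=\Pi^1$. Uniqueness is then forced, since any admissible $A'$ must satisfy $k\in A'\iff \p\,p(k)\in\Pi^0_{A'}=\Pi^0\iff k\in A$, where the first equivalence uses that $\Pi^0_{A'}$ contains exactly one member of each sibling pair.

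The whole argument is elementary bookkeeping, so there is no substantive obstacle. The only point requiring a moment of care is the multiset subtlety --- making sure that a statement such as ``$\p\,p(k)\in\Pi^0$'' is unambiguous when distinct indices $k$ can produce equal numerical values. I would dispose of this exactly as done elsewhere in the paper: regard the $2N$ posterior terms in $\Pi$ as distinct labeled objects carrying numerical values, under which both the explicit descriptions of $\Pi^0_A,\Pi^1_A$ and the definition of $A$ from a separating partition remain well posed, and the stated existence and uniqueness hold verbatim.
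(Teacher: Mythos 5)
Your proof is correct and follows essentially the same route as the paper's: write out $\Pi^0_A,\Pi^1_A$ explicitly, read off the forward direction, and recover $A$ in the converse from the membership of one sibling per pair (you use $\p\,p(k)\in\Pi^0$ where the paper uses $\pb\,p(k)\in\Pi^1$, which is the same argument). Your explicit treatment of uniqueness and of the labeled-multiset subtlety is consistent with the conventions the paper adopts.
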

\begin{proof} 
The first direction follows immediately from the definition. For the converse, write 
  \begin{align}
    \label{post_sets_def}
    \Pi^1_A &= \left\{\pb p(k) : k\in A\right\} \cup \left\{\p p(k) : k\in \bar{A}\right\}, 
  \end{align}
Hence $\Pi^1_A = \Pi^1$ implies that 
  \begin{align}
    A \triangleq\left \{k\in [N] : \pb p(k) \in \Pi^1\right\}. 
  \end{align}
Since $\Pi^0,\Pi^1$ separate the posterior-siblings, we also have 
  \begin{align}
    \bar{A} = \left \{k\in [N] : \p p(k) \in \Pi^1\right\}, 
  \end{align}
and using~\eqref{post_sets_def} again we have that $\Pi_A^1 = \Pi^1$. It is easy to check that $\Pi_A^0 = \Pi^0$ as well. 
\end{proof}

Let $\pi^y_A:[N]\to \Pi_A^y$ be the bijection recording the descending order on $\Pi_A^y$, i.e.:
\begin{align}
  \label{post_perms_def}
  \pi^y_A(k) \geq  \pi^y_A(k+1).
\end{align}
This bijection is unique by our assumption that all the posterior terms in $\Pi$ are distinct. With this notation at hand, we can write   
\begin{align}
  \label{G_A_from_post_map}
  G^f_A(X) = \sum_{k\in[N]}f(k)\cdot \left[\pi^0_A(k) + \pi^1_A(k)\right].
\end{align}

\begin{example}
  Let $N=2$ and $p(1) = 0.8 > p(2) = 0.2$. Carole lies with probability $\p = 0.1$, and Bob chooses $A = \{1\}$.
  This choice generates the following $\Pi^y_A$ sets:
  \begin{align}
    &\Pi^0_A = \{\p p(1), \pb p(2)\}\\
    &\Pi^1_A = \{\pb p(1), \p p(2)\},
  \end{align}
  $\pi^y_A$ in this case is 
  \begin{align}
    &\pi^0_A(1) = \pb p(2) = (1-0.1)\cdot 0.2 = 0.18\\
    &\pi^0_A(2) = \p p(1) = 0.1\cdot 0.8 = 0.08\\
    &\pi^1_A(1) = \pb p(1) = (1-0.1)\cdot 0.8 = 0.72\\
    &\pi^1_A(2) = \p p(2) = 0.1\cdot 0.2 = 0.02. 
  \end{align}
  and 
  \begin{align}
  	G^f_A(X) &= f(1)\left[\pi^0_A(1) + \pi^1_A(1)\right] + f(2)\left[\pi^0_A(2) + \pi^1_A(2)\right] \\
  	         &= f(1)\left[\pb p(2) + \pb p(1)\right] + f(2)\left[\p p(1) + \p p(2)\right]\\
  	         & = 0.9\cdot f(1) + 0.1\cdot f(2).
  \end{align}

\end{example} 
        
A bijection $\sigma:[2N]\to \Pi$ is {\em induced by $A \subseteq [N]$}, if for  all $k\in [N]$
\begin{align}
  \label{feas_cond}
  \{\sigma(2k-1), \sigma(2k)\} = \{\pi^0_A(k), \pi^1_A(k)\}.
\end{align}
Below we refer to $\sigma(2k-1)$ and $\sigma(2k)$ as {\em $\sigma$-siblings}. Note that for this $\sigma$,  
\begin{align}\label{eq:gfsum_sigma}
  G^f_A(X) = \sum_{k\in[N]}f(k)\cdot \left[\sigma(2k-1) + \sigma(2k)\right]. 
\end{align}

It is not difficult to check (e.g., by counting) that not all bijections $\sigma$ are induced by a partition. Let us now characterize the bijections that are induced by some $A$. We say that $A$ and $\sigma$ are {\em posterior-respecting} if $\Pi^0_A$ and $\Pi^1_A$ separate all the $\sigma$-siblings, i.e., 
\begin{align}
  \{\sigma(2k-1), \sigma(2k)\} \not\subseteq \Pi^y_A.
\end{align}
for any $k\in[N], y\in\{0,1\}$. In the context of our problem, the posterior-respecting property guarantees that the $k$th odd/even pair of posterior terms w.r.t. the order $\sigma$, would be a feasible $k$th guess for Bob under a positive/negative response from Carole to the partition $A$, i.e., will contain posterior terms corresponding to both possible answers. This in itself is however not sufficient; recall that when Bob guesses optimally, he guesses in decreasing order of posterior probabilities. Hence, we further say that a set $A$ and $\sigma$ are {\em order-preserving} if the elements of both $\Pi_A^0$ and $\Pi_A^1$ are ordered within the bijection, i.e., $\sigma(i) > \sigma(j)$ whenever $\{\sigma(i),\sigma(j)\} \subseteq \Pi^{y}_A$ for $i < j$ and some $y$. Together, we have:
\begin{lemma}
    \label{lemma_feas_bi}
    $\sigma$ is induced by $A$ if and only if they are posterior-respecting and order-preserving.
\end{lemma}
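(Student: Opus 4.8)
The plan is to prove both directions by carefully unpacking the three definitions (induced, posterior-respecting, order-preserving) and tracking where each element $\sigma(i)$ lands among the posterior sets $\Pi^0_A, \Pi^1_A$ and among the posterior orders $\pi^0_A, \pi^1_A$.

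For the forward direction, suppose $\sigma$ is induced by $A$, so that for every $k\in[N]$ we have $\{\sigma(2k-1),\sigma(2k)\}=\{\pi^0_A(k),\pi^1_A(k)\}$. First I would observe that posterior-respecting is immediate: since $\pi^0_A(k)\in\Pi^0_A$ and $\pi^1_A(k)\in\Pi^1_A$ (and these two sets are disjoint, being a partition of $\Pi$), the pair $\{\sigma(2k-1),\sigma(2k)\}$ contains exactly one element of each $\Pi^y_A$, hence is never contained in a single $\Pi^y_A$. For order-preserving, take $i<j$ with $\{\sigma(i),\sigma(j)\}\subseteq\Pi^y_A$ for some $y$. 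Write $i\in\{2k-1,2k\}$ and $j\in\{2\ell-1,2\ell\}$; from $i<j$ we get $k\le\ell$, and in fact $k<\ell$ because the two members of a single $\sigma$-sibling pair lie in different posterior sets (by what we just showed) and so cannot both be in $\Pi^y_A$. Now the definition of $\sigma$ being induced says $\sigma(i)$ is one of $\pi^0_A(k),\pi^1_A(k)$; since $\sigma(i)\in\Pi^y_A$ and $\pi^y_A(k)$ is the unique element of $\{\pi^0_A(k),\pi^1_A(k)\}$ lying in $\Pi^y_A$, we conclude $\sigma(i)=\pi^y_A(k)$. Likewise $\sigma(j)=\pi^y_A(\ell)$. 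Since $k<\ell$ and $\pi^y_A$ is arranged in (strictly, by the distinctness assumption) descending order, $\sigma(i)=\pi^y_A(k)>\pi^y_A(\ell)=\sigma(j)$, which is exactly order-preserving.

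For the converse, suppose $A$ and $\sigma$ are posterior-respecting and order-preserving; I want to show $\{\sigma(2k-1),\sigma(2k)\}=\{\pi^0_A(k),\pi^1_A(k)\}$ for all $k$. The key step is to define, for each $y\in\{0,1\}$ and each position $i\in[2N]$ with $\sigma(i)\in\Pi^y_A$, the rank $r_y(i)\triangleq|\{j\le i:\sigma(j)\in\Pi^y_A\}|$; order-preserving says $\sigma$ restricted to the positions landing in $\Pi^y_A$ is strictly decreasing, so $\sigma(i)=\pi^y_A(r_y(i))$. Next, for a fixed $k$, the pair of positions $\{2k-1,2k\}$ contributes exactly one element to $\Pi^0_A$ and one to $\Pi^1_A$ by the posterior-respecting property, so among the positions $1,\dots,2k$ there are exactly $k$ that land in $\Pi^0_A$ and exactly $k$ that land in $\Pi^1_A$; hence the element of $\{\sigma(2k-1),\sigma(2k)\}$ lying in $\Pi^y_A$ has rank exactly $k$, i.e. equals $\pi^y_A(k)$. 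Running this over $y=0$ and $y=1$ gives $\{\sigma(2k-1),\sigma(2k)\}=\{\pi^0_A(k),\pi^1_A(k)\}$, so $\sigma$ is induced by $A$.

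The only mild subtlety — and the step I would treat most carefully — is the bookkeeping in the converse showing that exactly $k$ of the first $2k$ positions land in each $\Pi^y_A$: this uses posterior-respecting applied to each pair $\{2j-1,2j\}$ with $j\le k$, and it is what forces the rank of the $\Pi^y_A$-member of the $k$th pair to be precisely $k$ rather than merely $\le k$ or $\ge k$. Everything else is a direct translation between the "unique element in $\Pi^y_A$" description and the "$\pi^y_A$ is the descending enumeration of $\Pi^y_A$" description, using distinctness of the posterior terms to make the descending orders strict and unique.
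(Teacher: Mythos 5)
Your proof is correct and follows essentially the same route as the paper: the forward direction is unwound from the definition, and the converse uses posterior-respecting to split each $\sigma$-sibling pair across $\Pi^0_A$ and $\Pi^1_A$ and order-preserving to force the resulting enumeration of each $\Pi^y_A$ to coincide with $\pi^y_A$. Your explicit rank function $r_y(i)$ is just a more spelled-out version of the paper's bijections $\sigma^0,\sigma^1$, and your forward direction is more detailed than the paper's ``trivially from definition.''
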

	
\begin{proof}
  If $\sigma$ is induced by $A$, the claim follows trivially from definition. Suppose $A$ and $\sigma$ are posterior-respecting. Then for any $k\in[N]$ there exists $y_k\in\{0,1\}$ such that the $k$th $\sigma$-siblings are separated:
  \begin{align}
    \sigma(2k-1) \in \Pi^{y_k}_A,\ \sigma(2k) \in \Pi^{1-y_k}_A.
  \end{align}
  This enables us to define the bijections $\sigma^0:[N]\to \Pi_A^0$ and $\sigma^1:[N]\to \Pi_A^1$ by 
  \begin{align}
    &\sigma^{y_k}(k) \triangleq \sigma(2k-1) \\
    &\sigma^{1-y_k}(k) \triangleq \sigma(2k) .
  \end{align}
  If $A$ and $\sigma$ are also order-preserving then
  it must be that $\sigma^0(1) > \sigma^0(2) > \ldots > \sigma^0(N)$, which means that $\sigma^0$ is a bijection
  from $[N]$ to $ \Pi^0_A$ that agrees with the posterior order bijection $\pi^0_A$. Since $\pi^0_A$ is unique, we conclude that
  $\sigma^0 = \pi^0_A$. Similarly, $\sigma^1 = \pi^1_A$. We have thus obtained the following set equalities:
  \begin{align}
    \{\sigma(2k-1), \sigma(2k)\} &= \{\sigma^{y_k}(k), \sigma^{1-y_k}(k)\}\\
    &= \{\sigma^0(k), \sigma^1(k)\}\\
    &= \{\pi^0_A(k), \pi^1_A(k)\},
  \end{align}
  and hence $\sigma$ is induced by $A$. 		
\end{proof}

We will now show (in Lemma~\ref{lemma_graph} below) that although there are many bijections $\sigma$ that are not induced by any partition, the only obstacle is the order-preserving property; we can always find (often many) partitions $A$ that satisfy the posterior-respecting property. This, together with Lemma~\ref{lemma_feas_bi}, will imply (as later argued in Lemma~\ref{lem:gopt}) that there are partitions that induce the natural descending order of posterior terms, which in turn attains the unconstrained optimum in our guessing problem. To that end, we first need to introduce the graph $\mathcal{G}_\sigma$ induced by a bijection ${\sigma:[2N]\to \Pi}$. The vertex set of $\mathcal{G}_\sigma$ is the set $\Pi$ of posterior terms, and we draw an edge between any two vertices that are either posterior-siblings or $\sigma$-siblings. 
\begin{lemma}
  $\mathcal{G}_\sigma$ is a disjoint union of even cycles and isolated edges. 
\end{lemma}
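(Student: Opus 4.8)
The plan is to analyze the degree of each vertex in $\mathcal{G}_\sigma$ and use this to decompose the graph. First I would observe that every vertex $v \in \Pi$ has exactly one posterior-sibling: if $v = \p p(k)$ then its posterior-sibling is $\pb p(k)$, and vice versa; since $0 < \p < 1/2$ these are distinct (and under the running distinctness assumption all $2N$ terms are distinct). Likewise, since $\sigma:[2N]\to\Pi$ is a bijection, every vertex $v = \sigma(i)$ has exactly one $\sigma$-sibling, namely $\sigma(i+1)$ if $i$ is odd and $\sigma(i-1)$ if $i$ is even. Hence every vertex is incident to exactly one ``posterior-sibling edge'' and exactly one ``$\sigma$-sibling edge.'' The only subtlety is whether these two edges could coincide, i.e. whether a vertex's posterior-sibling equals its $\sigma$-sibling; if that happens for $v$ it also happens for its partner, and that pair forms an isolated edge (a connected component that is a single edge, with both endpoints of degree $1$). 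Otherwise the vertex has degree exactly $2$.

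Next I would conclude that $\mathcal{G}_\sigma$ is a disjoint union of paths and cycles, since every vertex has degree $1$ or $2$; and the degree-$1$ vertices only occur in the isolated-edge components just described, so every other connected component is $2$-regular, hence a disjoint union of cycles. It remains to argue that each such cycle has even length. For this I would use the proper $2$-edge-coloring of $\mathcal{G}_\sigma$: color each edge ``P'' if it joins posterior-siblings and ``S'' if it joins $\sigma$-siblings. No edge is both (in the components under consideration), and at each vertex the two incident edges are one of each color, so along any cycle the edge colors must alternate P, S, P, S, $\dots$. A cycle admitting a proper alternating $2$-edge-coloring has even length. This yields the claim.

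The steps, in order: (1) show each vertex has a unique posterior-sibling and a unique $\sigma$-sibling; (2) split off the components where these coincide as isolated edges; (3) observe all remaining vertices have degree exactly $2$, so those components are cycles; (4) use the alternation of the P/S edge-coloring along a cycle to force even length. I expect step (2) — carefully handling the degenerate case where a posterior-sibling edge and a $\sigma$-sibling edge are the same edge, so that a $\{v, v'\}$ pair is an isolated edge rather than contributing to a cycle — to be the only real subtlety; everything else is a routine degree count. One should also note that the multigraph interpretation (were a vertex allowed to have its posterior-sibling coincide with its $\sigma$-sibling as a doubled edge) is resolved by treating such a pair as a single edge, which is exactly the ``isolated edge'' case in the statement.
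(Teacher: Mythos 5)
Your proof is correct and follows essentially the same route as the paper's: a degree count showing $\deg(v)\in\{1,2\}$, the observation that a degree-$1$ vertex's unique neighbor also has degree $1$ (so these pairs split off as isolated edges), and a cycle decomposition of the $2$-regular remainder. Your evenness argument via the alternating posterior-sibling/$\sigma$-sibling edge coloring along a cycle is in fact a more explicit version of the paper's terser ``both posterior-siblings must be in the same cycle'' remark, and is a welcome clarification.
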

\begin{proof}
By definition, the degree of each vertex $v$ is ${\deg(v)\in\{1,2\}}$.  If $\deg(v)=1$, and denoting its single adjacent vertex by $v'$, then $(v,v')$ are both posterior-siblings and $\sigma$-siblings, and hence $\deg(v')=1$. Thus the graph is a disjoint union of degree-$1$ vertices (i.e., isolated edges) and degree-$2$ vertices. The component of degree-$2$ vertices must be a disjoint union of cycles. Because both posterior-siblings must be in the same cycle, each cycle is of even length. 
\end{proof}
The following corollary is immediate. 
\begin{corollary}\label{cor:g_sigma}
$\mathcal{G}_\sigma$ is $2$-colorable, and the number of distinct colorings is $2^c$, where $c$ is the number of connected components of $\mathcal{G}_\sigma$. 
\end{corollary}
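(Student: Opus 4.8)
The plan is to deduce everything from the structural description of $\mathcal{G}_\sigma$ established in the preceding lemma, namely that $\mathcal{G}_\sigma$ is a vertex-disjoint union of even cycles and isolated edges. First I would invoke the elementary characterization that a graph admits a proper $2$-coloring (with two labeled colors) if and only if it contains no odd cycle. An even cycle and a single edge both contain no odd cycle, and a disjoint union of $2$-colorable graphs is again $2$-colorable; hence $\mathcal{G}_\sigma$ is $2$-colorable, which gives the first assertion.

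For the count I would argue component by component. Every vertex of $\mathcal{G}_\sigma$ has degree $1$ or $2$, so $\mathcal{G}_\sigma$ has no isolated vertices; each connected component $C$ therefore contains at least one edge and is connected, so a proper $2$-coloring of $C$ is completely determined once we fix the color of a single vertex $v\in C$ --- the colors of all other vertices are then forced by propagating the ``opposite color'' constraint along $C$, and since $C$ is bipartite this propagation is consistent. Both choices of color for $v$ extend to a valid proper coloring (they correspond to the two color classes of the bipartition, with the roles of the colors swapped), so $C$ admits exactly $2$ proper $2$-colorings.

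Finally, a proper $2$-coloring of $\mathcal{G}_\sigma$ amounts to an independent choice of a proper $2$-coloring on each of its $c$ connected components, so the total number of colorings is the product $2\cdot 2\cdots 2 = 2^c$. I do not expect any genuine obstacle here --- the corollary really is immediate from the lemma --- the only point deserving a moment's care is the remark that $\mathcal{G}_\sigma$ has no isolated vertices (guaranteed by $\deg(v)\in\{1,2\}$), which is exactly what makes every component contribute a factor of precisely $2$ rather than $1$.
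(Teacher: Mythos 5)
Your proof is correct and follows exactly the argument the paper has in mind --- the paper declares the corollary ``immediate'' from the structural lemma and gives no further proof, and your component-by-component count (each connected bipartite component admits exactly two proper $2$-colorings, and the choices are independent across the $c$ components) is the intended elaboration. One small quibble: your closing remark is slightly off, since an isolated vertex would \emph{also} contribute a factor of $2$ (either color is a proper coloring of a single vertex), so the observation that $\mathcal{G}_\sigma$ has no isolated vertices, while true, is not actually needed to rule out a factor of $1$.
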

\nocite{bondy1976graph}

With this in hand, we can prove the following.  
\begin{lemma}
  \label{lemma_graph}
  For any bijection $\sigma:[2N]\to \Pi$ there exists a partition $A$ such that $A$ and $\sigma$ are posterior-respecting. 
  Moreover, the number of such partitions $A$ is $2^c$, where $c$ is the number of connected components of $\mathcal{G}_\sigma$.
\end{lemma}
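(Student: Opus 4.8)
The key idea is to translate the posterior-respecting condition into a graph-coloring statement about $\mathcal{G}_\sigma$, and then invoke Corollary~\ref{cor:g_sigma}. Recall that $\mathcal{G}_\sigma$ has vertex set $\Pi$ and edges joining posterior-siblings and $\sigma$-siblings. A partition $A\subseteq[N]$ is equivalent, via Lemma~\ref{lemma_pi_part}, to a partition of $\Pi$ into $\Pi^0\cup\Pi^1$ that separates all posterior-siblings; that is exactly a $2$-coloring of $\Pi$ under which the two endpoints of every posterior-sibling edge receive different colors. The additional requirement that $A$ and $\sigma$ be posterior-respecting says the two endpoints of every $\sigma$-sibling edge also receive different colors. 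Hence: partitions $A$ posterior-respecting with $\sigma$ correspond bijectively to proper $2$-colorings of the graph $\mathcal{G}_\sigma$ (colorings where adjacent vertices differ), where the two color classes are identified with $\Pi^0_A$ and $\Pi^1_A$.

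First I would make this correspondence precise. Given a proper $2$-coloring of $\mathcal{G}_\sigma$ with colors $\{0,1\}$, let $\Pi^1$ be the set of vertices colored $1$ and $\Pi^0$ those colored $0$. Since posterior-siblings are adjacent, they get different colors, so $\{\Pi^0,\Pi^1\}$ separates posterior-siblings; by Lemma~\ref{lemma_pi_part} there is a unique partition $A$ with $\Pi^0_A=\Pi^0$, $\Pi^1_A=\Pi^1$. Since $\sigma$-siblings are also adjacent, $\{\Pi^0_A,\Pi^1_A\}$ separates every $\sigma$-sibling pair, i.e.\ $\{\sigma(2k-1),\sigma(2k)\}\not\subseteq\Pi^y_A$ for all $k,y$, which is precisely the posterior-respecting property. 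Conversely, if $A$ is posterior-respecting with $\sigma$, coloring $\Pi^1_A$ by $1$ and $\Pi^0_A$ by $0$ is proper: posterior-sibling edges are bichromatic by the first part of Lemma~\ref{lemma_pi_part}, and $\sigma$-sibling edges are bichromatic by the posterior-respecting assumption. The uniqueness clause of Lemma~\ref{lemma_pi_part} guarantees the map from colorings to partitions is injective, and the construction shows it is surjective onto the posterior-respecting partitions, so it is a bijection.

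Now I would count. By Corollary~\ref{cor:g_sigma}, $\mathcal{G}_\sigma$ is $2$-colorable (being a disjoint union of even cycles and isolated edges), so at least one proper $2$-coloring exists; pushing it through the bijection yields a partition $A$ posterior-respecting with $\sigma$, proving existence. Moreover Corollary~\ref{cor:g_sigma} states the number of proper $2$-colorings is $2^c$ with $c$ the number of connected components of $\mathcal{G}_\sigma$ — each component (an even cycle or an edge) admits exactly two proper $2$-colorings, chosen independently across components. Transporting this count through the bijection established above gives exactly $2^c$ partitions $A$ that are posterior-respecting with $\sigma$, which is the claim.

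The main obstacle is not any deep step but the bookkeeping in the bijection: one must be careful that distinct colorings really do give distinct partitions (this is where the uniqueness in Lemma~\ref{lemma_pi_part} is essential — two different color-class assignments cannot collapse to the same $A$) and that every posterior-respecting partition arises this way (surjectivity, which follows by coloring $\Pi^1_A$ with $1$). A secondary subtlety is that $\Pi$ is formally a multiset; as the paper notes, under the running assumption that all posterior terms are numerically distinct this issue disappears, and the adjacency structure of $\mathcal{G}_\sigma$ is unambiguous, so the argument goes through verbatim.
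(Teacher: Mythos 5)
Your proof is correct and follows essentially the same route as the paper's: identify posterior-respecting partitions with proper $2$-colorings of $\mathcal{G}_\sigma$ via Lemma~\ref{lemma_pi_part}, then invoke Corollary~\ref{cor:g_sigma} for existence and the $2^c$ count. Your extra care in verifying that the coloring-to-partition map is a genuine bijection (injectivity from the uniqueness clause of Lemma~\ref{lemma_pi_part}, surjectivity by coloring $\Pi^1_A$ with $1$) is a slightly more explicit treatment of a step the paper dispatches with ``any $2$-coloring clearly results in a distinct and unique $A$.''
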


\begin{proof}
  Fix some  $2$-coloring of $\mathcal{G}_\sigma$, which must exist by Corollary~\ref{cor:g_sigma}. Let $\Pi^0$ and $\Pi^1$ be the color classes associated with this coloring, which form a  partition of the vertex set $\Pi$ into two independent sets. Since posterior-siblings are connected by an edge, it follows that $\Pi^0$ and $\Pi^1$ separate all the posterior siblings. Thus according to Lemma~\ref{lemma_pi_part} there exists a set $A$ such that $\Pi^0 = \Pi^0_A$ and $\Pi^1 = \Pi^1_A$. Since $\sigma$-siblings are also connected by an edge, it follows that $\Pi_A^0$ and $\Pi_A^1$ separate all the $\sigma$-siblings. Hence, $A$ and $\sigma$ are posterior-respecting. Finally, any $2$-coloring clearly results in a distinct and unique $A$ satisfying the condition, hence in light of Corollary~\ref{cor:g_sigma} there are $2^c$ such partitions.  
\end{proof}

	\begin{example}\label{ex:G}
		Let $N = 4$ and define the following bijection $\sigma$:
		\begin{alignat*}{2}
			\sigma(2\cdot1-1) &= \pb  p(1)\qquad  &\sigma(2\cdot1) &= \pb  p(2)\\
			\sigma(2\cdot2-1) &= \p p(1) &\sigma(2\cdot2) &= \pb  p(3)\\
			\sigma(2\cdot3-1) &= \p p(2) &\sigma(2\cdot3) &= \p p(3)\\
			\sigma(2\cdot4-1) &= \pb p(4) &\sigma(2\cdot4) &= \p p(4).
		\end{alignat*}
		The corresponding graph $\mathcal{G}_\sigma$ and a legal $2$-coloring appears in Figure~\ref{fig:ex}. 
        \begin{figure}[t!]
        	\centering
        	\captionsetup{justification=centering}
			\begin{tikzpicture}
				\node[circle,label=above:$\pb  p(1)$,draw=black,fill,color=yellow,minimum size=0.5cm] (1) 	{};
				\node[circle,label=below:$\p p(1)$,draw=black,fill,color=red,minimum size=0.5cm] (2) [below = 1cm of 1]  {};
				\node[circle,label=below:$\pb  p(3)$,draw=black,fill,color=yellow,minimum size=0.5cm] (3) [right = 1.5cm of 2]	{};
				\node[circle,label=below:$\p p(3)$,draw=black,fill,color=red,minimum size=0.5cm] (4) [right = 1.5cm of 3]	{};
				\node[circle,label=above:$\p p(2)$,draw=black,fill,color=yellow,minimum size=0.5cm] (5) [above = 1cm of 4]	{};
				\node[circle,label=above:$\pb p(2)$,draw=black,fill,color=red,minimum size=0.5cm] (6) [left = 1.5cm of 5]	{};
				
				\node[circle,label=below:$\p p(4)$,draw=black,fill,color=red,minimum size=0.5cm] (7) [right = 1.5cm of 4]	{};
				\node[circle,label=above:$\pb p(4)$,draw=black,fill,color=yellow,minimum size=0.5cm] (8) [above = 1cm of 7]	{};
				
				\path[draw,thick]
				(1) edge node {} (2)
				(2) edge node {} (3)
				(3) edge node {} (4)
				(4) edge node {} (5)
				(5) edge node {} (6)
				(6) edge node {} (1)
				(7) edge node {} (8);
			\end{tikzpicture}
           	\caption{The graph $\mathcal{G}_\sigma$ and a $2$-coloring for Example~\ref{ex:G}}
           	\label{fig:ex}
		 \end{figure}
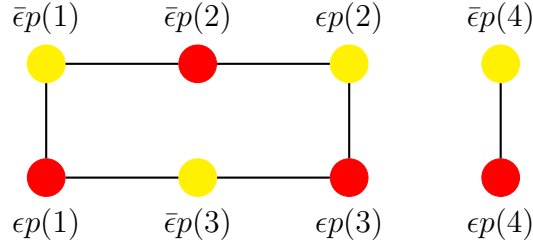
		If $\Pi^1$ is the set of the red nodes, then $\Pi^1=\Pi^1_A$ for $A=\{2\}$. Otherwise, If $\Pi^1$ is the set of the yellow nodes, then $\Pi^1=\Pi^1_A$ for  $A=\{1,3,4\}$. The number of legal 2-colorings is $2$ for each one of the connected components, in total $G_\sigma$ has $2\cdot 2 =4 $ legal 2-colorings, and each 2-coloring corresponds to different partition $A$.
	\end{example}

\section{Proof of Theorem~\ref{thrm:main}} \label{sec:main}
	\label{sec:zigzag_is_optimal}
	Define $\sigma^\downarrow: [2N] \to \Pi$ to be the unique bijection corresponding to the natural descending order on $\Pi$, i.e., such that   
	\begin{align}
		\label{pi_ord_prop}
		\sigma^\downarrow(1) > \sigma^\downarrow(2) > \cdots > \sigma^\downarrow(2N). 
	\end{align}
        
        In light of Lemma~\ref{lem:rearrange} and since~\eqref{eq:gfsum_sigma} holds for any permutation $\sigma$ induced by some partition $A$, we clearly have that 
        \begin{align}\label{eq:best_possible}          
          \min_{A\subseteq [N]}G^f_A(X) \geq  \sum_{k\in[N]}f(k)\cdot \left[\sigma^\downarrow(2k-1) + \sigma^\downarrow(2k)\right].
        \end{align}
        The right-hand-side of~\eqref{eq:best_possible} is an unconstrained minimum, since not all permutations are induced by a partition. Somewhat surprisingly, the permutation $\sigma^\downarrow$ that achieves the unconstrained minimum, is in fact always induced by some partition. 
	
\begin{lemma}
  \label{lem:gopt}
  For an optimal partition, it holds that 
  \begin{align} 
    \min_{A\subseteq [N]}G^f_A(X) = \sum_{k\in[N]}f(k)\cdot \left[\sigma^\downarrow(2k-1) + \sigma^\downarrow(2k)\right].
  \end{align}
  Moreover, the number of optimal partitions is $2^c$, where $c$ is the number of connected components of $\mathcal{G}_{\sigma^\downarrow}$.  
\end{lemma}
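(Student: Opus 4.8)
The plan is to combine Lemmas~\ref{lemma_feas_bi} and~\ref{lemma_graph} with the single observation that the global descending bijection $\sigma^\downarrow$ is automatically \emph{order-preserving} with respect to every partition. First I would apply Lemma~\ref{lemma_graph} to $\sigma^\downarrow$: it produces a partition $A$ that is posterior-respecting with $\sigma^\downarrow$, and it says there are exactly $2^c$ such partitions, where $c$ is the number of connected components of $\mathcal{G}_{\sigma^\downarrow}$. Next, for any such $A$ the order-preserving condition is vacuous: whenever $\{\sigma^\downarrow(i),\sigma^\downarrow(j)\}\subseteq\Pi^y_A$ with $i<j$ we get $\sigma^\downarrow(i)>\sigma^\downarrow(j)$ straight from~\eqref{pi_ord_prop}, independently of $A$. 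Hence Lemma~\ref{lemma_feas_bi} applies and $\sigma^\downarrow$ is induced by $A$, so by~\eqref{eq:gfsum_sigma},
$$G^f_A(X)=\sum_{k\in[N]}f(k)\cdot\bigl[\sigma^\downarrow(2k-1)+\sigma^\downarrow(2k)\bigr].$$
Combined with the lower bound~\eqref{eq:best_possible}, this gives the first claim: every partition that is posterior-respecting with $\sigma^\downarrow$ is optimal, and $\min_{A}G^f_A(X)$ equals the displayed expression.

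For the count it remains to prove the reverse inclusion, namely that every optimal $A$ is posterior-respecting with $\sigma^\downarrow$; Lemma~\ref{lemma_graph} then delivers the count $2^c$. Given an optimal $A$, pick any bijection $\sigma$ induced by $A$ (it exists by Lemma~\ref{lemma_feas_bi}) and write $G^f_A(X)=\sum_{j\in[2N]}g(j)\,\sigma(j)$ with $g(2k-1)=g(2k)=f(k)$; since $g$ is nondecreasing, the right-hand side is $\ge\sum_{j}g(j)\,\sigma^\downarrow(j)$ by Lemma~\ref{lem:rearrange}, with equality forced by optimality. Examining the equality case of Lemma~\ref{lem:rearrange} (no adjacent transposition strictly decreases the sum), $\sigma$ must agree with $\sigma^\downarrow$ up to permutations inside the level sets of $g$. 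For strictly increasing $f$ those level sets are exactly the sibling pairs $\{2k-1,2k\}$, so $\{\sigma(2k-1),\sigma(2k)\}=\{\sigma^\downarrow(2k-1),\sigma^\downarrow(2k)\}=\{\pi^0_A(k),\pi^1_A(k)\}$ for every $k$; hence $\sigma^\downarrow$ is itself induced by $A$, and Lemma~\ref{lemma_feas_bi} shows $A$ is posterior-respecting with $\sigma^\downarrow$.

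The main obstacle is conceptual rather than computational: one must recognize that, by Lemma~\ref{lemma_feas_bi}, the \emph{only} reason a bijection may fail to come from a partition is the order-preserving constraint, and that this constraint is automatically met by the one bijection of interest, the global descending order $\sigma^\downarrow$. A secondary subtlety is that the exact count $2^c$ really pertains to strictly increasing $f$ (for a merely nondecreasing $f$, e.g.\ a constant one, far more partitions attain the minimum, since the level sets of $g$ become coarser); the equality-case analysis above is precisely what isolates the genuinely optimal pairing. No lengthy calculation is involved---the work is in chaining the structural lemmas correctly and handling the tie-breaking with care.
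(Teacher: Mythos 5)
Your argument for the displayed formula is exactly the paper's: invoke Lemma~\ref{lemma_graph} to get a partition that is posterior-respecting with $\sigma^\downarrow$, observe that order-preservation is automatic because $\sigma^\downarrow$ is globally descending, conclude via Lemma~\ref{lemma_feas_bi} and~\eqref{eq:gfsum_sigma}, and match against the lower bound~\eqref{eq:best_possible}. Where you genuinely go beyond the paper is the counting claim. The paper simply says the count ``follows from Corollary~\ref{cor:g_sigma},'' which only counts partitions that are posterior-respecting with $\sigma^\downarrow$; to equate that with the number of \emph{optimal} partitions one also needs the converse, namely that every optimal $A$ must induce $\sigma^\downarrow$. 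You supply this via the equality case of the rearrangement inequality (Lemma~\ref{lem:rearrange}) applied to $g(2k-1)=g(2k)=f(k)$, correctly noting that the level sets of $g$ are the sibling pairs precisely when $f$ is strictly increasing, so that optimality forces $\{\sigma(2k-1),\sigma(2k)\}=\{\sigma^\downarrow(2k-1),\sigma^\downarrow(2k)\}$ for all $k$. Your caveat is also well taken: for a merely nondecreasing $f$ (e.g.\ constant), the equality case is weaker and strictly more than $2^c$ partitions can attain the minimum, so the count in the lemma should be read as pertaining to strictly increasing $f$ (or to the partitions that are simultaneously optimal for all nondecreasing $f$). Your version is the more complete proof; the extra cost is only the short equality-case analysis.
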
 

\begin{proof}
  By Lemma \ref{lemma_graph}, there exists some partition $A^\downarrow$ such that $A^\downarrow$ and $\sigma^\downarrow$ are posterior-respecting. It is easy to see that $A^\downarrow$ and $\sigma^\downarrow$ are order-preserving; this in fact holds for any partition $A$ simply since $\sigma^\downarrow$ is ordered. Invoking Lemma~\ref{lemma_feas_bi}, $\sigma^\downarrow$ is induced by $A^\downarrow$ and the claim follows from~\eqref{eq:gfsum_sigma}. The number of optimal partitions now follows from Corollary~\ref{cor:g_sigma}. 
\end{proof}
\begin{remark}
  Note that when counting the number of optimal partitions, we are counting partitions and their complements, which essentially corresponds to the same solution. The number of truly distinct solutions is therefore $2^{c-1}. $ 
\end{remark}
\begin{remark}
  It is instructive to examine two extremal cases. First, consider the completely noisy case where $\p =1/2$, where clearly any partition is optimal (and useless). Indeed, in this case $\mathcal{G}_{\sigma^\downarrow}$ comprises a disjoint union $N$ isolated edges, simply since each pair of posterior siblings is also a pair of $\sigma^\downarrow$-siblings, hence there are exactly $N$ connected components yielding $2^N$ optimal partitions, which is the maximal possible. Note that the same conclusion remains valid when $\p < 1/2$ but is sufficiently close to $1/2$ such that $\p p(k) \geq \pb p(k+1)$ for all $k$. Now, on the other extreme, consider the noiseless case where $\p =0$ and assume for simplicity that $N$ is even (note that here, there are $N$ distinct posterior terms are associated with the same numerical value of zero). In this case, $\mathcal{G}_{\sigma^\downarrow}$ comprises a disjoint union of cycles of length $4$, each of the form $(\pb p(2k-1), \p p(2k-1), \p p(2k), \pb p(2k))$. Hence, the number of optimal partitions is given by $2^{N/2}$. Indeed, this makes sense: It is obvious that in this case the zigzag is an optimal partition, but we can also switch the order of any (odd, even) pair in the partition without sacrificing anything, yielding $2^{N/2}$ possible solutions. Note that the same conclusions remain valid in the almost-
  noiseless regime, when $\p$ is sufficiently small such that $\p p(1) \leq \pb p(N)$.
\end{remark}

We have seen that the unconstrained minimum can be attained, and that in general, there may be many partitions that attain it. But it is still unclear what these optimal partitions look like. Interestingly, we now show that the zigzag partition is always a member of the set of optimal partitions, which concludes the proof of Theorem~\ref{thrm:main}. To that end, it suffices to show the following: 
	\begin{lemma}
		\label{lemma_azz}
		$A_{ZZ}$ and $\sigma^\downarrow$ are posterior-respecting. 
	\end{lemma}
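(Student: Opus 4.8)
The plan is to reformulate the posterior-respecting condition as a statement about how the posterior terms in the initial segments of the descending order $\sigma^\downarrow$ are split between the two posterior sets of $A_{ZZ}$, and then to settle that statement by a short parity computation.

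First I would write the posterior sets of $A_{ZZ}=\{k\in[N]:k\text{ odd}\}$ explicitly. From the formula $\Pi^1_A=\{\pb p(k):k\in A\}\cup\{\p p(k):k\in\bar A\}$ used in the proof of Lemma~\ref{lemma_pi_part}, we get
\[
\Pi^1_{A_{ZZ}}=\{\pb p(k):k\text{ odd}\}\cup\{\p p(k):k\text{ even}\},\qquad
\Pi^0_{A_{ZZ}}=\{\p p(k):k\text{ odd}\}\cup\{\pb p(k):k\text{ even}\}.
\]
The $k$-th pair of $\sigma^\downarrow$-siblings $\{\sigma^\downarrow(2k-1),\sigma^\downarrow(2k)\}$ is separated by $\Pi^0_{A_{ZZ}},\Pi^1_{A_{ZZ}}$ exactly when it contains one member of each, so it suffices to show that exactly one of $\sigma^\downarrow(2k-1),\sigma^\downarrow(2k)$ lies in $\Pi^1_{A_{ZZ}}$ for every $k\in[N]$. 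Letting $s(m)$ denote the number of elements of $\Pi^1_{A_{ZZ}}$ among the $m$ largest posterior terms (well defined since the elements of $\Pi$ are assumed distinct), this is equivalent to $s(2k)=k$ for all $k$, because $s(2k)-s(2k-2)=1$ then forces the $k$-th sibling pair to meet $\Pi^1_{A_{ZZ}}$ in exactly one point.

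The key step is the following structural observation: since $\pb p(1)>\cdots>\pb p(N)$ and $\p p(1)>\cdots>\p p(N)$, the $\pb$-posterior-terms among the top $m$ are exactly $\pb p(1),\dots,\pb p(a)$, and the $\p$-posterior-terms among the top $m$ are exactly $\p p(1),\dots,\p p(b)$, for some $a,b\ge0$ with $a+b=m$. Counting within these two prefixes, the members of $\Pi^1_{A_{ZZ}}$ are the $\pb p(i)$ with $i$ odd (of which there are $\lceil a/2\rceil$) together with the $\p p(j)$ with $j$ even (of which there are $\lfloor b/2\rfloor$), so $s(m)=\lceil a/2\rceil+\lfloor b/2\rfloor$. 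When $m=2k$ is even, $a$ and $b$ have the same parity, hence $\lceil a/2\rceil+\lfloor b/2\rfloor=(a+b)/2=k$, which is exactly what we need. Combined with Lemma~\ref{lemma_feas_bi} (and the observation in the proof of Lemma~\ref{lem:gopt} that every partition is order-preserving with $\sigma^\downarrow$) this shows $\sigma^\downarrow$ is induced by $A_{ZZ}$, so $A_{ZZ}$ attains the unconstrained minimum and Theorem~\ref{thrm:main} follows.

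I do not anticipate a genuine obstacle once the problem is phrased this way: the real content reduces to the one-line identity $\lceil a/2\rceil+\lfloor b/2\rfloor=(a+b)/2$ for $a\equiv b\pmod 2$. The only thing requiring a little care is the reformulation itself — rather than analysing an individual sibling pair directly (which does work, but splits into the cases of two $\pb$-terms, two $\p$-terms, and one of each), it is cleaner to track the global count $s(m)$ and exploit the fact that each initial segment of $\sigma^\downarrow$ is a prefix of the sorted $\pb$-terms concatenated with a prefix of the sorted $\p$-terms.
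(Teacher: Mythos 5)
Your proof is correct, and it packages the argument differently from the paper's. The paper splits the $\sigma^\downarrow$-sibling pairs into three cases by type: for $\{\p p(i),\p p(j)\}$ and $\{\pb p(i),\pb p(j)\}$ it argues $|i-j|=1$ (adjacent indices, hence opposite parity), and for the mixed pair $\{\p p(i),\pb p(j)\}$ it counts the posterior terms larger than both, getting $i+j-2$, which must be even because siblings come in pairs — so $i\equiv j\pmod 2$. Your argument replaces this case split with a single global count: every initial segment of $\sigma^\downarrow$ is a prefix of the sorted $\pb$-terms together with a prefix of the sorted $\p$-terms, so the number $s(2k)$ of $\Pi^1_{A_{ZZ}}$-elements among the top $2k$ terms equals $\lceil a/2\rceil+\lfloor b/2\rfloor=k$ when $a+b=2k$, and telescoping gives exactly one $\Pi^1_{A_{ZZ}}$-element per sibling pair. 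Both proofs ultimately rest on the same structural fact (initial segments of the descending order decompose as two prefixes, which is also what underlies the paper's count $i+j-2$) and on the same parity bookkeeping; yours trades the paper's more transparent pair-by-pair reasoning for a uniform one-line identity, which is arguably cleaner and avoids having to treat the three pair types separately. The one hypothesis you implicitly use — that $\pb p(1)>\cdots>\pb p(N)$ and $\p p(1)>\cdots>\p p(N)$ so that the prefix structure holds — follows from the paper's standing assumption that all posterior terms are distinct, so there is no gap.
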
 
\begin{proof}
  $A_{ZZ}$ partitions $\Pi$ into the following two posterior sets:
  \begin{align}
    \Pi^0_{A_{ZZ}} =& \{\p p(k) : k \text{ odd}\} \cup \{\pb p(k) : k\text{ even}\} \label{eq:zz_posterior1} \\
    \Pi^1_{A_{ZZ}} =& \{\pb p(k) : k \text{ odd}\} \cup \{\p p(k) : k\text{ even}\}\label{eq:zz_posterior2}.
  \end{align}
  To prove our claim, we need to show that this partition separates all the $\sigma^\downarrow$-siblings. To that end, we make a distinction between different types of $\sigma^\downarrow$-siblings:  
  \begin{enumerate}[(i)]
  \item The $\sigma^\downarrow$-siblings are of the form $\{\p p(i), \p p(j)\}$: Since the probabilities are descending order ($p(k) > p(k+1)$) and $\sigma^\downarrow$ also orders the posterior terms in descending order (cf.~\eqref{pi_ord_prop}), then it must be that $|j-i|=1$. Hence, we conclude that $j$ and $i$ must have different parities. In light of~\eqref{eq:zz_posterior1}-\eqref{eq:zz_posterior2}, it is clear that these $\sigma^\downarrow$-siblings cannot belong to the same posterior set.

\item The $\sigma^\downarrow$-siblings are of the form $\{\pb p(i), \pb p(j)\}$: This follows similarly to the previous case. 

\item The $\sigma^\downarrow$-siblings are of the form $\{\p p(i), \pb p(j)\}$: 
  Since the probabilities are descending order ($p(k) > p(k+1)$) and $\sigma^\downarrow$ also orders the posterior terms in descending order (cf.~\eqref{pi_ord_prop}), and $\p < 1/2$, it must hold that $i\leq j$. Let us count how many posterior terms are greater than both  $\{\p p(i), \pb p(j)\}$. These terms are exactly all the terms of the form $\{\pb p(k)\} _{k=1}^{j-1}$, $\{\p p(k)\}_{k=1}^{i-1}$, a total of exactly ${(j-1) + (i-1) = i+j-2}$ terms. This number must be even, since the $\sigma^\downarrow$-siblings come in pairs. Therefore, $i$ and $j$ must have the same parity and again, in light of~\eqref{eq:zz_posterior1}-\eqref{eq:zz_posterior2}, it is clear that these $\sigma^\downarrow$-siblings cannot belong to the same posterior set.
  \end{enumerate}
\end{proof}

This concludes the proof of our main result. A simple consequence is the following: 
\begin{corollary}\label{cor:zz_unique_cond}
  The zigzag partition is the unique optimal partition (up to complements) if and only  if $\mathcal{G}_{\sigma^\downarrow}$ is a cycle on $2N$ vertices. 
\end{corollary}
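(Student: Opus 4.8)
The plan is to read this off directly from the counting statement in Lemma~\ref{lem:gopt}, together with Lemma~\ref{lemma_azz}; essentially no new work is needed beyond careful bookkeeping. First I would record that $A_{ZZ}$ is genuinely an optimal partition: Lemma~\ref{lemma_azz} says $A_{ZZ}$ and $\sigma^\downarrow$ are posterior-respecting, and since $\sigma^\downarrow$ is monotone it is trivially order-preserving with any partition, so Lemma~\ref{lemma_feas_bi} shows $\sigma^\downarrow$ is induced by $A_{ZZ}$ and Lemma~\ref{lem:gopt} gives optimality. This is what makes the corollary a meaningful uniqueness statement rather than an existence statement.

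Next I would note that $A_{ZZ}$ and its complement $\bar A_{ZZ}$ are always two \emph{distinct} optimal partitions (the index $1$ lies in $A_{ZZ}$ but not in $\bar A_{ZZ}$), so there are always at least two optimal partitions, and the phrase ``unique up to complements'' is equivalent to saying that the total number of optimal partitions is exactly $2$. Then I would invoke Lemma~\ref{lem:gopt}: the number of optimal partitions is $2^{c}$, where $c$ is the number of connected components of $\mathcal{G}_{\sigma^\downarrow}$. Hence $A_{ZZ}$ is the unique optimal partition up to complements if and only if $2^{c}=2$, i.e. $c=1$, i.e. $\mathcal{G}_{\sigma^\downarrow}$ is connected.

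Finally I would translate ``$\mathcal{G}_{\sigma^\downarrow}$ connected'' into the stated form: since $\mathcal{G}_{\sigma^\downarrow}$ is a disjoint union of even cycles and isolated edges on its $2N$ vertices, connectedness is equivalent to it consisting of a single even cycle spanning all $2N$ vertices (with the degenerate case $N=1$, where the unique pair of posterior siblings forms a double edge, interpreted as the $2$-cycle on $2N$ vertices; for $N\geq 2$ an isolated-edge component has only $2<2N$ vertices and so cannot be the whole graph).

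The only point requiring care — not really an obstacle — is the factor-of-two arising from complementation: Lemma~\ref{lem:gopt} counts $A$ and $\bar A$ as separate optimal partitions, so one must pass through the $2^{c-1}$ count of genuinely distinct solutions noted in the remark following Lemma~\ref{lem:gopt} in order to correctly equate ``unique up to complements'' with $c=1$. Everything else is a direct substitution into the already-established lemmas.
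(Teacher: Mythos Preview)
Your proposal is correct and is exactly the argument the paper has in mind: the corollary is stated without proof as ``a simple consequence'' of Lemma~\ref{lem:gopt} (together with Lemma~\ref{lemma_azz} and the structure of $\mathcal{G}_\sigma$), and you have accurately spelled out the bookkeeping that turns the $2^c$ count into the equivalence $c=1 \Leftrightarrow$ single $2N$-cycle.
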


\begin{remark}
  The condition provided in Corollary~\ref{cor:zz_unique_cond} is simple to check, but is not very intuitive; specifically, it is not immediately clear which distributions and noise levels could satisfy it, if any. To show it is not vacuous and gain some insight, consider the following special case. Suppose that the distribution $p(k)$ and noise level $\p$ satisfy the following condition:
  \begin{align}
      \max_{k\in[N-2]} \frac{p(k+2)}{p(k)} < \p / \pb < \min_{k\in[N-1]} \frac{p(k+1)}{p(k)}.
  \end{align}
  It is then rather straightforward to check that $\mathcal{G}_{\sigma^\downarrow}$ is a cycle on $2N$ vertices, and hence the zigzag partition is the unique optimal partition. To see this, demonstrating for simplicity using $N=5$, note that this condition implies that the posterior terms satisfy 
  \begin{align}
      \pb p(1) > \pb p(2) > \p p(1) > \pb p(3) > \p p(2) > \pb p(4) > \p p(3) > \pb p(5) > \p p(4) > \p p(5)
  \end{align}
  which induces the $10$-cycle $(\pb p(1), \p p(1), \pb p(3), \p p(3) , \pb p(5), \p p(5), \p p(4), \pb p(4), \p p(2), \pb p(2))$. While for many distributions this condition cannot be satisfied by any noise level $\p$, there are cases where the condition holds, for any alphabet size $N$. Indeed, it is easy to see that this happens when the distribution is (truncated) geometric or approximately so, e.g., when $p(k) = \alpha \beta^k$ for some $\beta\in (0,1)$ and a suitable $\alpha$ and one picks any $\p$ such that $\beta^2 < \p / \pb < \beta$. 
\end{remark}

\section{Multiple-choice questions}\label{sec:multiple_choice}
A natural extension of the problem considered so far is a setup in which Bob, instead of asking a binary question, can ask a multiple-choice $M$-ary question for some $M>2$, i.e., can partition $[N]$ into $M$ sets $\{A^i\}_{i=0}^{M-1}$ and ask Carole which one contains $X$. In this case, instead of a binary symmetric channel, it is natural to assume that Carole's answer is corrupted by some $M$-ary symmetric modulo-additive channel. Explicitly,  writing $\A \triangleq \{A^i\}_{i=0}^{M-1}$ for short, and given that $X \in A^j$, Bob receives a noisy answer
\begin{align}
	Y_{\A} \triangleq j + V \mod M,
\end{align}
where $P_V(v) = \p_v$ for $v \in \{0, 1,\ldots,M-1\}$. In what follows, we generalize our previous definitions and show that it is computationally hard to test whether, for a general bijection $\sigma$, there exists a partition $\A$ such that $A$ and $\sigma$ are posterior-respecting; namely, we show that Lemma~\ref{lemma_graph} no longer holds for $M > 2$. This observation seems to suggest that it is perhaps also hard to test whether a partition $A$ is optimal. We conclude this section with an example showing that unlike in the binary case, for $M > 2$ the unconstrained optimum cannot be always achieved. It is nevertheless worth noting that in the special case where the channel is fully symmetric, i.e., where Carole has no preference when she lies, a generalized modulo-$M$ zigzag question achieves the unconstrained optimum; see a brief discussion in Section~\ref{sec:discussion}. 

Below, we refer to $\{\p_0 p(k),\p_1 p(k), \ldots, \p_{M-1} p(k)\}$ as posterior-siblings, and we let 
\begin{align}
	\Pi \triangleq \bigcup_{k\in[N]} \{\p_0 p(k),\p_1 p(k), \ldots, \p_{M-1} p(k)\},
\end{align}
be the set of all posterior-siblings. Following the same definition for posterior sets $\Pi^y_{\A}$,
\begin{align}
	\Pi^y_{\A} &\triangleq \left\{P_{X,Y_{\A}}(x_k^{y},y)\right\}_{k\in[N]},
\end{align}
and the corresponding bijections $\pi^y_{\A}$, we can write Bob's expected $f({\textrm{guessing time}})$ given Carole's answer as 
\begin{align}
	\label{G_A_from_post_map_mary}
	G^f_{\A}(X) = \sum_{k\in[N]}f(k)\cdot \sum_{y=0}^{M-1}\pi^y_{\A}(k).
\end{align}

\begin{remark}\label{rem:H_bound_Mary}
Before we proceed, it is worth noting that one can use Massey's bound~\eqref{eq:massey_lb} to obtain a lower bound on the expected guessing time $G_{\A}(X)$, i.e., where $f(k) = k$. We restrict ourselves to this case only for simplicity of exposition; one can derive similar lower bounds on $G^f_{\A}(X)$ using maximum entropy arguments, see e.g.~\cite{weinberger2020guessing}. To obtain the bound, note that $H(X|Y_{\A}) \geq H(X) + H(V) - \log M$. Applying Massey's bound~\eqref{eq:massey_lb} to the conditional distribution $P_{X|Y_{\A}}(\cdot |Y_{\A})$ and averaging over Carole's answers, we obtain
\begin{align}
    G_{\A}(X) &\geq \mathbb{E}_{Y_{\A}}\left(\frac{2^{H(P_{X|Y_{\A}}(\cdot |Y_{\A}))}}{\kappa}\right) + 1 \\ 
    & \geq \frac{2^{H(X|Y_{\A})}}{\kappa} + 1 \label{eq:lb_maxent_jensen}\\
    & = \frac{2^{H(X) + H(V)}}{\kappa M} + 1, 
\end{align}
where we have used Jensen's inequality in~\eqref{eq:lb_maxent_jensen}. 
\end{remark}

Given a partition $\{\Pi^i\}_{i=0}^{M-1}$ of $\Pi$, we say that it {\em cyclically-separates} the posterior-siblings $\{\p_0 p(k),\p_1 p(k), \ldots, \p_{M-1} p(k)\}$ if 
\begin{align}
	\label{cyc_sep_cond}
	\p_i p(k) \in \Pi^j \Longleftrightarrow \p_{i+1} p(k)\in \Pi^{j+1},
\end{align}
where indices are calculated modulo $M$.
\begin{lemma} \label{lemma_gen_pi_part}
	 (Generalization of Lemma~\ref{lemma_pi_part}) 
	For any $A$, the posterior sets cyclically-separate all the posterior-siblings. Conversely, for any partition $\{\Pi^i\}_{i=0}^{M-1}$ of $\Pi$ that cyclically-separates all the posterior-siblings, there exists a unique partition $\A$ such that $\Pi_{\A}^y = \Pi^y$.
\end{lemma}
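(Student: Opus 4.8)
The plan is to follow the proof of Lemma~\ref{lemma_pi_part} almost verbatim, the one new ingredient being the bookkeeping of the cyclic shift that the block of $\A$ containing a symbol induces on Carole's noiseless answer.

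For the forward direction I would fix $\A=\{A^i\}_{i=0}^{M-1}$ and, for $k\in[N]$, write $j(k)$ for the index with $k\in A^{j(k)}$. Since $Y_{\A}=j(k)+V \bmod M$ when $X=k$, the posterior term $\p_i p(k)$ is nothing but $P_{X,Y_{\A}}\bigl(k,\,i+j(k)\bmod M\bigr)$, so it lies in $\Pi^{i+j(k)}_{\A}$ and in no other posterior set; reading this off for consecutive values of $i$ gives precisely~\eqref{cyc_sep_cond}.

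For the converse I would start from a partition $\{\Pi^i\}_{i=0}^{M-1}$ of $\Pi$ that cyclically-separates all posterior-siblings, and for each $k$ let $c_k(i)$ be the index of the block containing $\p_i p(k)$. Because~\eqref{cyc_sep_cond} is a biconditional, substituting $j=c_k(i)$ yields $c_k(i+1)=c_k(i)+1$, hence $c_k(i)=c_k(0)+i\bmod M$, so the $M$ siblings of each $k$ occupy all $M$ blocks. I would then define $j(k)\triangleq c_k(0)$ and $A^j\triangleq\{k\in[N]:j(k)=j\}$, obtaining a partition $\A$ for which $\Pi^y_{\A}=\{\p_{y-j(k)}p(k):k\in[N]\}\subseteq\Pi^y$; since each $\Pi^y$ already contains exactly one sibling per $k$ it has $N$ elements, matching $|\Pi^y_{\A}|=N$, and the inclusion becomes an equality. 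Uniqueness goes exactly as in Lemma~\ref{lemma_pi_part}: any partition inducing $\{\Pi^i\}$ must send $k$ to the block indexed by the unique class containing $\p_0 p(k)$, which is $j(k)$, so it coincides with $\A$.

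I do not expect a genuine obstacle here; the only step with no analogue in the binary case — and thus worth spelling out — is the passage from the biconditional~\eqref{cyc_sep_cond} to the statement that $i\mapsto c_k(i)$ is a true cyclic rotation, since this is what forces the blocks $\Pi^i$ to be equinumerous and upgrades the inclusion $\Pi^y_{\A}\subseteq\Pi^y$ to an equality.
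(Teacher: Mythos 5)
Your proof is correct and is essentially the paper's own argument: the paper simply states that the forward direction is immediate from the definition and that the converse is the trivial generalization of Lemma~\ref{lemma_pi_part}, which is exactly what you carry out. The one step you rightly spell out --- upgrading the biconditional~\eqref{cyc_sep_cond} to the statement that $i\mapsto c_k(i)$ is a full cyclic rotation, so each $\Pi^y$ contains exactly one sibling per $k$ and the inclusion $\Pi^y_{\A}\subseteq\Pi^y$ is an equality --- is the only point where the $M$-ary case needs any care, and you handle it correctly.
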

\begin{proof}
	The first direction follows immediately from the definition. The converse is a trivial generalization of the converse in Lemma~\ref{lemma_pi_part}.
\end{proof}
        
Continuing generalization of previous definitions, a bijection $\sigma:[MN]\to \Pi$ is {\em induced by $\A$}, if for  all $k\in [N]$
\begin{align}
\label{gen_induc_cond}
\{\sigma(Mk), \sigma(Mk-1), \ldots, \sigma(M(k-1)+1)\} = \{\pi^0_{\A}(k), \pi^1_{\A}(k), \ldots, \pi^{M-1}_{\A}(k)\},
\end{align}
and for such $\sigma$
\begin{align}\label{eq:gen_gfsum_sigma}
G^f_A(X) = \sum_{k\in[N]}f(k)\cdot \sum_{i=0}^{M-1} \sigma(Mk-i). 
\end{align}
where $\{\sigma(Mk-i)\}_{i=0}^{M-1}$ are {\em $\sigma$-siblings} in the general case. $\A$ and $\sigma$ are {\em posterior-respecting} if 
$\{\Pi^y_{\A}\}_{y=0}^{M-1}$ separate (not necessary cyclically-separate)  all the $\sigma$-siblings, i.e., 
\begin{align}\label{gen_post_resp_cond}
\{\sigma(Mk-i), \sigma(Mk-j)\} \not\subseteq \Pi^y_{\A},
\end{align}
for any $k\in[N]$, $y, j, i\in\{0,1,\ldots,M-1\}$ and $i\ne j$. Completing the generalization, $A$ and $\sigma$ are {\em order-preserving} if 
the elements of $\Pi_{\A}^y$ are ordered within the bijection, i.e., $\sigma(i) > \sigma(j)$ whenever $\{\sigma(i),\sigma(j)\} \subseteq \Pi^{y}_{\A}$ for $i < j$ and some $y$. 
\begin{lemma} (Generalization of Lemma~\ref{lemma_feas_bi})
	\label{lemma_gen_feas_bi}
	$\sigma$ is induced by $\A$ if and only if they are posterior-respecting and order-preserving.
\end{lemma}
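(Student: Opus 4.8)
The plan is to follow the proof of Lemma~\ref{lemma_feas_bi} essentially verbatim, the only genuinely new wrinkle being a pigeonhole step that supplies, in the $M$-ary setting, the one-to-one placement of siblings among the posterior sets that was automatic when $M=2$.

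First, the easy direction. If $\sigma$ is induced by $\A$, then for every $k$ the $k$-th block of $\sigma$, namely $\{\sigma(M(k-1)+1),\dots,\sigma(Mk)\}$, equals $\{\pi^0_{\A}(k),\dots,\pi^{M-1}_{\A}(k)\}$ by~\eqref{gen_induc_cond}. Since the $M$ values $\pi^y_{\A}(k)$ lie one in each posterior set, no two $\sigma$-siblings share a posterior set, which is~\eqref{gen_post_resp_cond}; and if two $\sigma$-values sit in a common $\Pi^y_{\A}$ they must come from distinct blocks $k<k'$, where they equal $\pi^y_{\A}(k)$ and $\pi^y_{\A}(k')$, hence appear in descending order along $\sigma$ — so $\A$ and $\sigma$ are order-preserving.

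Then the substantive direction. Assume $\A$ and $\sigma$ are posterior-respecting and order-preserving. Fix $k$; by~\eqref{gen_post_resp_cond} no two of the $M$ siblings $\sigma(M(k-1)+1),\dots,\sigma(Mk)$ lie in the same posterior set, and since there are $M$ siblings and $M$ posterior sets, a pigeonhole argument forces each posterior set to contain exactly one sibling from this block. Define $\sigma^y:[N]\to\Pi^y_{\A}$ by letting $\sigma^y(k)$ be the unique member of the $k$-th block lying in $\Pi^y_{\A}$; this is well-defined, and since $\sigma$ is a bijection and $|\Pi^y_{\A}|=N$, it is a bijection onto $\Pi^y_{\A}$. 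Because the $k$-th block occupies positions strictly to the left of the $(k+1)$-th block, the $\sigma$-positions of $\sigma^y(1),\dots,\sigma^y(N)$ increase in $k$, so order-preservation gives $\sigma^y(1)>\cdots>\sigma^y(N)$; by uniqueness of the posterior-order bijection we conclude $\sigma^y=\pi^y_{\A}$ for every $y$. Consequently $\{\sigma(Mk),\dots,\sigma(M(k-1)+1)\}=\{\sigma^0(k),\dots,\sigma^{M-1}(k)\}=\{\pi^0_{\A}(k),\dots,\pi^{M-1}_{\A}(k)\}$, i.e.~\eqref{gen_induc_cond} holds and $\sigma$ is induced by $\A$. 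I expect no real obstacle here: the argument is a faithful transcription of the binary proof, and the only place requiring any thought — rather than being immediate as for $M=2$ — is the pigeonhole observation that pairwise separation of $M$ siblings among $M$ sets already yields a perfect one-to-one assignment.
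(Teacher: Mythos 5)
Your proof is correct and follows the same route as the paper's: the paper's "permutation $\gamma_k$" assigning the $k$-th block of $\sigma$-siblings bijectively to the posterior sets is precisely the pigeonhole step you make explicit, and the rest (defining $\sigma^y$, using order-preservation to identify it with $\pi^y_{\A}$) matches verbatim. No gaps.
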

\begin{proof}
	If $\sigma$ is induced by $\A$, the claim follows trivially from definition. Suppose $\A$ and $\sigma$ are posterior-respecting. Then for any $k\in[N]$ there exists a permutation $\gamma_k$ of $\{0,1,\ldots,M-1\}$ such that the $k$th $\sigma$-siblings separated as follows
	\begin{align}
	\sigma(Mk-i) \in \Pi^{\gamma_k(i)}_{\A}.
	\end{align}
	This enables us to define $M$ bijections $\sigma^i:[N]\to \Pi_{\A}^i$ as follows 
	\begin{align}
	&\sigma^{\gamma_k(i)}(k) \triangleq \sigma(Mk-i) .
	\end{align}
	If $A$ and $\sigma$ are also order-preserving then
	it must be that $\sigma^i(1) > \sigma^i(2) > \ldots > \sigma^i(N)$, which means that $\sigma^i$ is a bijection
	from $[N]$ to $ \Pi^i_{\A}$ that agrees with the posterior order bijection $\pi^i_{\A}$. Since $\pi^i_{\A}$ is unique, we conclude that
	$\sigma^i = \pi^i_{\A}$. We have thus obtained the following set equalities:
	\begin{align}
	\{\sigma(Mk-i)\}_{i=0}^{M-1} &= \{\sigma^{\gamma_k(i)}(k)\}_{i=0}^{M-1}\\
	&= \{\sigma^i(k)\}_{i=0}^{M-1}\\
	&= \{\pi^i_{\A}(k)\}_{i=0}^{M-1},
	\end{align}
	and hence $\sigma$ is induced by $A$. 		
\end{proof}

\begin{theorem} \label{thm_gen_np_hard}
	Deciding for general $\sigma$ if it is induced by some partition $\A$ is NP-hard 
\end{theorem}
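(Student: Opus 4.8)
\emph{Proof plan.} I would prove NP-hardness by reduction from the following decision problem, call it $\mathrm{MDD}_M$: given variables $z_1,\dots,z_n$ ranging over $\mathbb{Z}_M$ together with a list of \emph{modular difference disequations} $z_{a_\ell}-z_{b_\ell}\not\equiv c_\ell \pmod M$, $\ell\in[L]$, decide whether a satisfying assignment exists. The first ingredient, which I would isolate as a separate lemma, is that $\mathrm{MDD}_M$ (in fact already a bounded–number–of–occurrences version of it) is NP-hard for $M\ge 3$; note that the special case $c_\ell\equiv 0$ is exactly $M$-colorability, which is NP-hard precisely for $M\ge 3$, the case $M=2$ being polynomial -- consistent with the fact that Lemma~\ref{lemma_graph} holds in the binary case. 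The harder part is making the occurrence of each variable bounded (by a function of $M$) while keeping the problem NP-hard, since this is what the encoding below can afford.

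The reduction produces a $\sigma$-instance -- an alphabet size $N$, distinct decreasing probabilities $p(1)>\dots>p(N)$, noise values $\p_0,\dots,\p_{M-1}$, and a bijection $\sigma:[MN]\to\Pi$ -- as follows. The $MN$ posterior terms $\{\p_i p(k)\}$ are chosen so that their descending numerical order splits into the $N$ consecutive $M$-element groups $B_k=\{\sigma(M(k-1)+1),\dots,\sigma(Mk)\}$; call such a $\sigma$ \emph{block-monotone}. Each variable $z_j$ is represented by one or more symbols (several symbols, tied together by auxiliary ``equality'' blocks, when $z_j$ occurs often), and each disequation $\ell$ is encoded by a dedicated block $B_k$ containing the posterior term $\p_0 p(u)$ of a symbol $u$ representing $z_{a_\ell}$ and the term $\p_{c_\ell} p(v)$ of a symbol $v$ representing $z_{b_\ell}$; the remaining $M-2$ slots of such a block, as well as all blocks not tied to a disequation, are filled with posterior terms of fresh auxiliary (``dummy'') symbols, arranged so that the leftover dummy terms can be grouped into all-dummy blocks (for which separation is automatic, so one uses a rainbow-matching/Baranyai-type argument to pack them).

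For correctness I would invoke the shift parametrization behind Lemma~\ref{lemma_gen_pi_part}: choosing a partition $\A$ is the same as choosing a shift $s_k\in\mathbb{Z}_M$ for every symbol $k$, with $\p_i p(k)\in\Pi^{\,i+s_k}_{\A}$. The requirement that a block $B_k=\{\p_{i_1}p(k_1),\dots,\p_{i_M}p(k_M)\}$ be separated by $\{\Pi^y_{\A}\}$ then reads $s_{k_a}-s_{k_b}\not\equiv i_b-i_a$ for all $a\ne b$; for a disequation block this is exactly $s_u-s_v\not\equiv c_\ell$, for an equality block it forces copies of a variable to share a shift, and for every block touching a dummy the constraints involving that dummy can be met by fixing the dummy shifts last (each dummy must avoid at most $M-1$ of the $M$ shift values, so a choice always remains). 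Hence some $\A$ is posterior-respecting with $\sigma$ iff the $\mathrm{MDD}_M$ instance is satisfiable. Finally, block-monotonicity makes the order-preserving condition of Lemma~\ref{lemma_gen_feas_bi} automatic for any posterior-respecting $\A$: two posterior terms in the same class $\Pi^y_{\A}$ cannot be $\sigma$-siblings (that is what posterior-respecting forbids), so they lie in different blocks $B_{k_1},B_{k_2}$, and block-monotonicity places the one in the earlier block -- which has the larger value -- at the earlier $\sigma$-index. Thus $\sigma$ is induced by some $\A$ $\iff$ some $\A$ is posterior-respecting with $\sigma$ $\iff$ the $\mathrm{MDD}_M$ instance is satisfiable; as the construction has size polynomial in $n+L$, NP-hardness follows.

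The main obstacle is the construction, in two parts. The first is \emph{realizability}: one must exhibit genuine distinct decreasing probabilities $p(\cdot)$ and noise values $\p_\cdot$ whose $MN$ products sort into the prescribed block layout; the only rigid constraint among the products is that, within each symbol, $\p_i p(k)$ are ordered consistently with the $\p_i$'s, and by letting $p(\cdot)$ decay quickly across symbols and spacing the $\p_i$'s appropriately one can force essentially any block layout that respects this per-symbol order -- which the encoding is designed to do. The second is the \emph{bookkeeping}: proving NP-hardness of the bounded-occurrence $\mathrm{MDD}_M$ (so that a single symbol's $M$ posterior terms suffice for each variable, with extra symbols handling the rest), and verifying that every auxiliary/dummy constraint is vacuous or can be satisfied greedily, so that the reduction is faithful in both directions.
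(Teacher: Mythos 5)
Your overall strategy coincides with the paper's: parametrize partitions by per-symbol shifts $z_n\in\mathbb{Z}_M$ via Lemma~\ref{lemma_gen_pi_part}, observe that the posterior-respecting condition for each $\sigma$-block is a conjunction of difference disequations $z_{n_1}-z_{n_2}\not\equiv m_2-m_1 \pmod M$, and reduce satisfiability of a general system of such disequations to the existence of a posterior-respecting partition, using duplication/equality gadgets to cope with the fact that a bijection gives you only $M$ posterior terms per symbol, and helper/dummy symbols whose shifts can always be fixed last. One genuine (and welcome) simplification on your side: you obtain NP-hardness of the disequation system directly from graph $M$-colorability (the all-zero-constants special case), whereas the paper runs a reduction from NAE-$M$-SAT; your route is shorter and equally valid, and the bounded-occurrence issue you worry about is in any case absorbed by the duplication gadget at the $\sigma$ level, exactly as in the paper. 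You also explicitly handle the order-preserving half of ``induced'' via block-monotonicity, a point the paper passes over rather quickly (it proves hardness of the posterior-respecting test and invokes Lemma~\ref{lemma_gen_feas_bi}).

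The one genuine gap is your realizability step. The paper deliberately avoids it: it redefines $\Pi$ as the abstract set $[N]\times[M]$ and treats $\sigma$ as an arbitrary bijection of labelled tuples, so no actual probabilities or noise values ever need to be exhibited; indeed, whether hardness persists for the realizable bijection $\sigma^\downarrow$ arising from a genuine distribution is explicitly left open in the Discussion. Your plan instead insists on producing real numbers $p(1)>\cdots>p(N)$ and $\p_0,\dots,\p_{M-1}$ whose products sort into the prescribed blocks, and the heuristic you offer does not work: if $p(\cdot)$ decays quickly across symbols then every block of the descending order is exactly one symbol's sibling set, which destroys the encoding (disequation blocks must mix symbols). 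More fundamentally, the achievable orderings of $\{\p_i p(k)\}$ are the orderings of a sumset $\{\log\p_i+\log p(k)\}$, a family with only $M+N$ degrees of freedom and far fewer realizable linear orders than there are block layouts consistent with the per-symbol order; so ``essentially any block layout'' is not attainable, and the specific gadget layouts would each need a feasibility argument that is not supplied. If you drop realizability and work with the abstract $\sigma$ as the paper does, your argument closes (and your block-monotonicity observation can be retained as a statement about the index order of $\sigma$ rather than about numerical values); as written, the reduction is not yet complete. A smaller point: your claim that each dummy shift must avoid ``at most $M-1$'' values holds only if each dummy's terms meet non-dummy terms in few blocks, so the placement of dummy terms needs the same care the paper takes with its helper rows.
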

To prove this, we will show that testing whether $\sigma$ has a posterior-respecting partition, is NP-hard. In light of Lemma~\ref{lemma_gen_feas_bi}, it is in fact a stronger statement.\\

For brevity, before proceeding to prove the theorem, we introduce a more accessible equivalent formulation of the problem. From this point on, we redefine $\Pi$ as $\Pi \triangleq [N]\times[M]$ and each term $\p_v p(k)$ is replaced with a tuple $(k, v+1)$ (notice that the $\pm1$ correction is due to a change from zero-based to one-based indexing). Respectively, we will change the definition of $\sigma$ to an equivalent definition $\sigma:[N]\times[M] \to [N]\times[M]$, which follows the following equivalence relation with the previous definition
\begin{align}
	\sigma(Mj-i+1) = \p_{v-1}p(k) \Longleftrightarrow \sigma(j, i) = (k, v) ,
\end{align}
for all $k,j \in [N]$ and $i,v \in [M]$. Let us define a mapping $\zeta:[N]\times[M]\to \mathbb{Z}_M$, that collects for a given partition $\A$, the index of the posterior set that each $(n, m)$ belongs to. i.e., if $(n, m) \in \Pi^y_{\A}$ then $\zeta(n, m) = y$. Due to Lemma~\ref{lemma_gen_pi_part}
\begin{align}
	\zeta(n, m) = \zeta(n, 1) + m - 1,
\end{align}
where additions from this point on are done modulo $M$, and we define $z_n \triangleq \zeta(n, 1)$. Now, given that $\sigma(k, i) = (n_1, m_1)$ and $\sigma(k,j) = (n_2, m_2)$, (\ref{gen_post_resp_cond}) becomes
\begin{align}
	& \zeta(\sigma(k, i)) \ne \zeta(\sigma(k, j)) \\
	& \Longrightarrow \zeta(n_1, m_1) \ne \zeta(n_2, m_2) \\
	& \Longrightarrow z_{n_1} + m_1 \ne z_{n_2} + m_2 \\
	& \Longrightarrow z_{n_1} - z_{n_2} \ne m_2 - m_1 \label{dmd}.
\end{align}
Any partition $\A$ which is posterior-respecting with respect to $\sigma$ must obey the above condition for all $k \in [N]$ and $i\ne j \in [M]$. Conversely, any assignment to $\{z_n\}_{n=1}^N$ that satisfies the above condition for all $k \in [N]$ and $i\ne j \in [M]$, uniquely defines a posterior-respecting partition. We will show, that testing whether such an assignment exists, is NP-hard, and this is why also asking if there exists a posterior-respecting partition is NP-hard. We call (\ref{dmd}) a {\em Difference Modular Disequation} (DMD). In \cite{jain2008efficient}, Himanshu et al. showed that classifying whether a system of linear modular disequations is satisfiable is NP-hard. They do this, by reducing a 3-SAT problem to a system of linear modular disequations. We will prove that also answering whether a system of DMDs, i.e set of equations of the form $w_i - w_j \not\equiv c_k \pmod{M}$ (for some constants $c_k$) is NP-hard, which is a strengthening of \cite{jain2008efficient}. To prove this we will reduce another well known NP-hard problem, the Not All Equal-3SAT (NAE-3SAT) \cite{moret1988planar} to a system of DMDs.

\begin{lemma} \label{lemma_gen_lmd_np_hard}
	The problem of deciding whether a system of DMDs can be satisfied is NP-hard.
\end{lemma}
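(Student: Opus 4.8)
The plan is to reduce NAE-3SAT to the problem of satisfiability of a system of Difference Modular Disequations (DMDs) over $\mathbb{Z}_M$. Recall that an instance of NAE-3SAT is a 3-CNF formula, and the question is whether there is a truth assignment such that in every clause the three literals are \emph{not all equal} (i.e., each clause has at least one true and at least one false literal). The high-level idea is to encode a Boolean variable $x$ by a $\mathbb{Z}_M$-valued variable $w_x$ that we will force to take one of two ``legal'' values, say $0$ or $1$, and to encode negation and the not-all-equal clause constraint using disequations of the form $w_i - w_j \not\equiv c \pmod M$.

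First I would handle the case $M=2$ (or $M=3$) and then indicate how it extends; the cleanest route is probably to work directly for general $M\ge 3$. For each Boolean variable $x$, introduce a $\mathbb{Z}_M$ variable $w_x$. To pin $w_x$ down to $\{0,1\}$ I would introduce one auxiliary ``anchor'' variable $w_0$ (a global reference) and add disequations $w_x - w_0 \not\equiv c$ for all $c \in \{2,3,\dots,M-1\}$; this leaves only $w_x - w_0 \in \{0,1\}$, so after translating we may assume $w_x \in \{0,1\}$ encodes $\mathrm{false}/\mathrm{true}$. A literal $\ell$ is then represented by the pair (variable $w_x$, polarity): the ``value'' of a positive literal is $w_x$ and of a negative literal is $1-w_x$; to avoid needing actual arithmetic $1-w_x$ I would instead, for each negated occurrence, introduce a fresh variable $u$ with the gadget $u + w_x \equiv 1$, which itself is enforced by the two disequations $u - (w_0) \in \{0,1\}$ together with a disequation linking $u$ and $w_x$ forcing them unequal on $\{0,1\}$, i.e. $u - w_x \not\equiv 0$ — so $u = 1-w_x$ on the legal domain. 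Now a clause on literals with representative legal variables $a,b,c \in \{0,1\}$ is NAE-satisfied iff they are not all $0$ and not all $1$; the ``not all $0$'' and ``not all $1$'' conditions are not directly single DMDs, so this is where the gadget must do real work.

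The key step, and the main obstacle, is the clause gadget: expressing ``$a,b,c$ are not all equal'' (with $a,b,c \in \{0,1\}$) as a conjunction of DMDs, possibly with auxiliary variables. Here I would exploit the extra room in $\mathbb{Z}_M$ for $M\ge 3$. One workable approach: introduce an auxiliary variable $s$ intended to equal $a+b+c$, but since we cannot write sums directly, build $s$ incrementally via two auxiliary variables $t$ (for $a+b$) and $s$ (for $t+c$), each constrained by a ``controlled increment'' gadget — e.g. disequations forcing $t-a \in \{0,1\}$, $t-b \in \{0,1\}$ consistent with $t = a+b$ when the arguments are Boolean (this needs case analysis and may need $M\ge 4$, or a more clever encoding for $M=3$). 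Once $s$ faithfully equals $a+b+c \in \{0,1,2,3\}$, the NAE condition becomes exactly $s \not\equiv 0$ and $s \not\equiv 3$, i.e. two disequations $s - w_0 \not\equiv 0$ and $s - w_0 \not\equiv 3$ (reading $3$ as $3 \bmod M$, which for $M=3$ degenerates and would need separate treatment — another reason to first do $M \ge 4$ or $M=3$ by hand). I expect the delicate part to be making the increment/sum gadgets correct using only \emph{difference} disequations with constant right-hand sides, since equalities like $t=a+b$ are ``positive'' constraints that a single disequation cannot impose; the trick will be that when all the slack variables are confined to $\{0,1\}$, a carefully chosen small family of disequations leaves exactly the intended functional relationship as the only survivor.

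Finally I would verify the reduction is polynomial (a constant number of fresh variables and disequations per clause and per literal occurrence, plus $O(M)$ disequations per variable to anchor it, and $M$ is a fixed constant so this is $O(n+m)$ overall) and prove correctness in both directions: a NAE-satisfying assignment yields a satisfying $\{z_n\}$ assignment by setting each $w_x$ to $0/1$ and propagating through the gadgets; conversely, any satisfying $\mathbb{Z}_M$ assignment has all anchored variables in $\{0,1\}$, the negation and sum gadgets then force the intended Boolean semantics, and the final two clause disequations guarantee the not-all-equal property. Since NAE-3SAT is NP-hard, so is deciding satisfiability of a system of DMDs, which is the claim of the lemma; for completeness I would also remark that this strengthens the result of~\cite{jain2008efficient} since every DMD is in particular a linear modular disequation.
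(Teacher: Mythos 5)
Your overall strategy---reduce from NAE-3SAT, confine each Boolean variable's $\mathbb{Z}_M$ representative to a two-element domain via disequations against a global anchor, and handle negation with a fresh variable forced unequal to the original---matches the paper's in spirit. But the clause gadget, which is the heart of the reduction, is left unconstructed, and the route you sketch for it does not work. Difference disequations can only \emph{forbid} values; they cannot enforce a functional identity such as $t=a+b$. Concretely, with $a,b,c\in\{0,1\}$ and constraints of the form $t-a\in\{0,1\}$, $t-b\in\{0,1\}$, $s-t\in\{0,1\}$, $s-c\in\{0,1\}$, the all-false assignment $a=b=c=0$ (which violates NAE) still admits $t=1$, $s=1$, and then $s\not\equiv 0$ and $s\not\equiv 3$ both hold, so the gadget accepts a clause it should reject. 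The slack that lets $s$ drift away from $a+b+c$ destroys the soundness direction of the reduction, and it is not clear that any family of pure difference disequations can pin $s$ to the exact sum.

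The paper avoids this entirely with a much lighter gadget: for a clause on literals with representatives $\ell_1,\ell_2,\ell_3$ (each already confined to a common two-element coset, say $\{s+1,s+2\}$), introduce one fresh variable $c$ and impose $\ell_1-c\not\equiv 0$, $\ell_2-c\not\equiv 1$, $\ell_3-c\not\equiv 2 \pmod 3$. Each disequation forbids exactly one value of $c$, and a short case check---using crucially that the $\ell_j$ live in a two-element set---shows the three forbidden values cover all of $\mathbb{Z}_3$ precisely when the three literal values coincide; hence $c$ has a legal value iff the clause is NAE-satisfied, with no arithmetic on the literals required. You should replace your sum-based clause gadget with a covering/pigeonhole gadget of this type; as written the proposal has a genuine gap. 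A minor additional point: drop the suggestion to ``first handle $M=2$''---modulo $2$ a DMD $w_i-w_j\not\equiv c$ is just the linear equation $w_i-w_j\equiv c+1$, so that case is solvable by Gaussian elimination and is not NP-hard; the lemma is only claimed for $M\ge 3$.
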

\begin{proof}
	We prove the lemma for $M=3$ and discuss the (trivial) extension to $M > 3$ at the end. Let the variables of the NAE-3SAT problem be $x_1, x_2, \ldots, x_n$. For each variable $x_i$ in NAE-3SAT, we introduce two integer variables $w_i$ and $\hat{w}_i$ and we add the following equations to the system:
	\begin{align}
		w_i - s         &\not\equiv 0\\
		\hat{w}_i - s   &\not\equiv 0\\
		w_i - \hat{w}_i &\not\equiv 0,
	\end{align}
	where $s$ is a common variable. We now may define a mapping between the value of $w_i - s \mod 3$ and the value of $x_i$. It is not very important, so we will choose that if the difference is $1$, $x_i$ is false, and if the difference is $2$, $x_i$ is true. Then the value of $\hat{w}_i - s \mod{3}$ represents $\neg{x_i}$. Let $\delta$ be a mapping between literals and their corresponding variable, i.e. $\delta(x_i) = w_i$ and $\delta(\neg{x_i}) = \hat{w}_i$. For each clause $u \vee v \vee w$, we introduce an integer variable $c_i$ and add the following equations to the system:
	\begin{align}
		\delta(u) - c_i &\not\equiv 0\\
		\delta(v) - c_i &\not\equiv 1\\
		\delta(w) - c_i &\not\equiv 2.
	\end{align}
	Given a solution to the system, we use the mapping that was defined above to find the corresponding assignment to $\{x_i\}_{i=1}^{n}$. Notice that for some clause $i$, $c_i$ has no legal value if and only if $\delta(u) = \delta(v) = \delta(w)$, so if there is a solution, it cannot be that all the literals of some clause are equal. The reduction from the given NAE-3SAT to the system of equations is polynomial time. To extend this to $M > 3$, we reduce a NAE-$M$-SAT to a system of DMDs modulo $M$ the same way. Obviously NAE-$M$-SAT for $M > 3$ is still NP-hard. It can be proven for example recursively, by reducing a NAE-$(M-1)$-SAT to NAE-$M$-SAT. Given that $c$ is a clause from a NAE-$(M-1)$-SAT instance, we replace it with $(c \vee x) \wedge (c \vee y)$, where $x$ and $y$ are dummy variables, and the extra clause $x \vee x \ldots \vee x \vee y$ ($x$ appears $M-1$ times) forces $x = \neg y$. Then, for example if $x=0$, the clause $c \vee x$, forces the literals in $c$ to be not all equal to $0$, and the clause $c \vee y$ forces the literals in $c$ to be not all equal to $1$.
\end{proof}

Next, we will show that it is possible to reduce any system of DMDs to a problem of deciding whether for some partition $\sigma$, there is an assignment to $\{z_n\}_{n=1}^{N}$ such that the corresponding partition is posterior-respecting with respect to $\sigma$.

\begin{lemma}\label{lemma_post_resp_np_hard}
	For a general $\sigma$, testing whether there is a posterior-respecting partition is NP-hard.
\end{lemma}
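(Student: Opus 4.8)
The plan is to reduce from the system-of-DMDs satisfiability problem, shown NP-hard in Lemma~\ref{lemma_gen_lmd_np_hard}, to the problem of deciding, for a given bijection $\sigma:[N]\times[M]\to[N]\times[M]$, whether there exists an assignment $\{z_n\}_{n=1}^N$ satisfying all the induced DMDs~\eqref{dmd}; by the discussion preceding the lemma, this is equivalent to deciding whether $\sigma$ admits a posterior-respecting partition. So, starting from an arbitrary instance of the DMD problem — a finite set of variables $w_1,\dots,w_N$ over $\mathbb{Z}_M$ and a finite collection of disequations of the form $w_{a_t}-w_{b_t}\not\equiv c_t\pmod M$ — I want to build in polynomial time a bijection $\sigma$ whose induced constraint set~\eqref{dmd} is exactly (or is equivalent to) this collection, identifying the DMD variable $w_n$ with $z_n$.

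The key observation driving the construction is how a single constraint arises: given $k\in[N]$ and two coordinates $i\ne j\in[M]$, if $\sigma(k,i)=(n_1,m_1)$ and $\sigma(k,j)=(n_2,m_2)$ then~\eqref{dmd} reads $z_{n_1}-z_{n_2}\not\equiv m_2-m_1\pmod M$. Thus each ``column'' $k$ (the $M$ cells $\sigma(k,1),\dots,\sigma(k,M)$) contributes a clique of $\binom{M}{2}$ disequations among the $z$-variables indexed by the first coordinates appearing in that column. To encode the single target disequation $w_a-w_b\not\equiv c$, I would devote a column $k$ in which exactly two of the $M$ cells map to posterior-siblings with first coordinates $a$ and $b$ and second coordinates chosen so that $m_2-m_1\equiv c$, while the remaining $M-2$ cells of that column map to ``padding'' posterior-siblings whose first coordinates are fresh dummy indices used nowhere else; the spurious disequations those padding cells generate only constrain dummy $z$-variables among themselves and against $z_a,z_b$, and since a dummy index appears in just one column, those extra constraints are always simultaneously satisfiable (a dummy $z$ has $M$ choices and at most $M-1$ forbidden values per... one must check this counts correctly, which is the one place where $M\ge 2$ padding slots could over-constrain a dummy — see below). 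One also needs to pad with enough extra columns so that every element of $[N]\times[M]$ is hit exactly once and $\sigma$ is a genuine bijection; these filler columns are built entirely from fresh dummy indices arranged to be self-consistent.

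The main obstacle, and the step deserving the most care, is controlling the \emph{side effects}: a bijection on $[N]\times[M]$ forces every posterior-sibling group $\{(n,1),\dots,(n,M)\}$ to be used, and the columns must tile the whole grid, so one cannot simply write down the ``wanted'' disequations in isolation — one must verify that the full set of induced constraints is satisfiable \emph{if and only if} the original DMD system is. The clean way to do this is: (1) reserve a block of ``real'' indices $1,\dots,N$ for the DMD variables and a polynomially bounded block of dummy indices; (2) arrange dummy columns and dummy padding cells so that each dummy index's $z$-value is constrained only within a single small gadget where it has a free choice regardless of the real assignment (e.g., give each dummy its own private column so the constraints on it are "column-internal" and trivially satisfiable by picking its $z$ last); (3) for the real constraints, ensure each target disequation $w_a-w_b\not\equiv c$ is realized by a dedicated column whose only two non-dummy cells carry first coordinates $a,b$ with the correct second-coordinate difference, so the real part of the induced system is precisely the DMD system. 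Then a DMD solution extends to a valid $\{z_n\}$ by choosing dummy values greedily, and conversely any valid $\{z_n\}$ restricts to a DMD solution; combined with Lemma~\ref{lemma_gen_lmd_np_hard} and the NP membership of the decision problem, this establishes NP-hardness. If the naive padding does threaten to over-constrain a dummy index (because $M-2$ padding cells in the same column as a real pair can force $M-1>M-1$... it cannot, but the bookkeeping must be exhibited), one can instead use a single dummy index per real column whose full sibling group is split across that column, which generates only "column-internal" disequations on that one dummy — always satisfiable — and dump all remaining grid cells into purely dummy filler columns.
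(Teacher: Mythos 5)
Your overall strategy matches the paper's: reduce from DMD satisfiability (Lemma~\ref{lemma_gen_lmd_np_hard}), encode each disequation $w_a-w_b\not\equiv c$ in a column of $\sigma$ via two cells with first coordinates $a,b$ and second-coordinate difference $c$, and argue that padding/filler constraints only touch fresh dummy variables that can be assigned last. Your handling of the padding is fine: the $M$ cells of a column must receive $M$ distinct $\zeta$-values, so dummies appearing in a single column can always absorb the leftover values.

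However, there is a genuine gap: you never address what happens when a variable $w_a$ appears in more than $M$ disequations. Since $\sigma$ is a bijection, the sibling group of index $a$ consists of exactly $M$ cells $(a,1),\dots,(a,M)$, each usable in exactly one column; so your ``dedicated column per disequation, with a cell of $a$ in each'' scheme caps the number of disequations involving $w_a$ at $M$. This is not a corner case you can ignore: the instances produced by Lemma~\ref{lemma_gen_lmd_np_hard} contain a common variable $s$ occurring in $2n$ disequations, so your reduction cannot consume the output of the previous lemma as stated, and you would need a separate (unproved) claim that DMD satisfiability remains NP-hard under bounded variable occurrence. The paper's proof resolves exactly this with a \emph{duplication gadget}: it introduces copy indices $k',k''$ together with helper indices $j,j'$ and arranges three columns whose induced DMDs force $z_j\ne z_{j'}$ and force $z_k,z_{k'},z_{k''}$ to avoid both, which for $M=3$ pins $z_k=z_{k'}=z_{k''}$; the copies then supply fresh cells for additional disequations involving $w_k$. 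Without this gadget (or an equivalent mechanism for replicating a variable's $z$-value across unboundedly many columns), your construction does not establish the reduction. A minor additional point: NP membership is irrelevant to NP-hardness and need not be invoked.
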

\begin{proof}
	Throughout the proof, we show how to construct $\sigma$, that generates an equivalent system of DMDs for any given system of DMDs, thus showing that finding a posterior-respecting partition is generally at least as hard as solving a system of DMDs. We will prove it for $M = 3$, but the same technique also extends to $M > 3$. Given the following equation $i$
	\begin{align}\label{eq:wk_minus_wl}
		w_k-w_l \not\equiv c_i,
	\end{align}
	we may try a straightforward mapping, and have $w_k$ and $w_l$ to be mapped to $z_k$ and $z_l$ respectively. By having $\sigma(1,1) = (k, 1)$ and $\sigma(1,2)=(l, c_i+1)$, according to (\ref{dmd}), this mapping generates an equivalent equation to (\ref{eq:wk_minus_wl}). However, this naive approach does not scale, since we run into trouble in case $w_k$ appears in another disequation. Recall that because $\sigma$ is a bijection, we cannot use $(k, 1)$ for a different input to $\sigma$. To be able to scale, we introduce a duplication gadget that duplicates the variable $z_k$, i.e., we will add a row to $\zeta$ such that $\zeta(k, *) = \zeta(k', *)$ \footnote{* stands for any index}. We add the following to $\sigma$
	\begin{align}
		& \sigma(1, *) = \{ (k, 1),   (j, 1), (j', 1)\}\\
		& \sigma(2, *) = \{ (k', 2),  (j, 2), (j', 2)\}\\
		& \sigma(3, *) = \{ (k'', 3), (j, 3), (j', 3)\},
	\end{align}
	where $k'$, $k''$ are duplication rows, and $j$, $j'$ are some helper rows that do not correspond to any variable in the original system of equations (we use * because the order between the $\sigma$-siblings does not matter). These rows in $\sigma$ generate the following DMDs:
	\begin{align}
		z_k - z_j       &\not\equiv 0 \\
		z_k - z_j'      &\not\equiv 0 \\
		z_{k'} - z_j    &\not\equiv 0 \\
		z_{k'} - z_{j'} &\not\equiv 0 \\
		z_{k''} - z_j   &\not\equiv 0 \\
		z_{k''} - z_{j'}&\not\equiv 0 \\
		z_j - z_{j'}    &\not\equiv 0.
	\end{align}
	A solution to this system must have $z_j \ne z_{j'}$ and $z_k$, $z_{k'}$ and $z_{k''}$ must be different from both $z_j$ and $z_{j'}$. So it must be that
	\begin{align}
		z_k = z_{k'} =  z_{k''},
	\end{align}
	thus we can use $(k, *)$, $(k', *)$ or $(k'', *)$ interchangeably when constructing $\sigma$. In order to maintain clearer presentation, we will not carry the $k'$ and $k''$, and just assume that we have three copies of $(k, 1)$, $(k, 2)$ and $(k, 3)$, where one of each was already been used for duplication. Now, to represent the disequation we add the following to $\sigma$
	\begin{align}
		& \sigma(4, *) = \{(k, 1),  (l, c_i+1), (i, 1)\} \\
		& \sigma(5, *) = \{(k, 2),  (l, c_i+2), (i, 2)\} \\
		& \sigma(6, *) = \{(k, 3),  (l, c_i+3), (i, 3)\},
	\end{align}
	where $i$ is a helper row that correspond to equation $i$. Indeed all of the above rows in $\sigma$ generate the same DMDs, but we need to place all the $(k, *)$, $(l, *)$ and $(i, *)$ somewhere in $\sigma$. We use the last copy of $(k, *)$ to duplicate and/or represent another disequation. For each disequation, we generate at most a constant number of corresponding duplications, hence showing that there is an assignment to $\{z_n\}_{n=1}^{N}$ that corresponds to a posterior-respecting partition with respect to the $\sigma$ we have constructed, is at least as hard as deciding whether a system of DMDs can be satisfied, and using Lemma~\ref{lemma_gen_lmd_np_hard} it is NP-hard.
\end{proof}
Theorem~\ref{thm_gen_np_hard} is a direct corollary of Lemma~\ref{lemma_gen_feas_bi} and Lemma~\ref{lemma_post_resp_np_hard}. We will conclude with an example of a case where the unconstrained optimum is not achievable and propose a non hermetic method of testing whether it is achievable. 
\begin{example}\label{ex:trenary_ex}
	For $M=3$ and $V \sim [\p_0 = 0.5, \p_1 = 0.3, \p_2 = 0.2]$, then for ${X \sim [0.35\ 0.26\ 0.24\ 0.15]}$, there is no partition $\{A^i\}_{i=0}^{2}$ that achieves the unconstrained optimum. A necessary condition for $\sigma^\downarrow$ to be induced by some partition is that there exists a posterior-respecting partition. We will try to construct such a partition. Without loss of generality, we start by assigning $x_4$ to $A^0$, therefore $\p_0 p(4) \in \Pi_{\A}^0$, $\p_1 p(4) \in \Pi_{\A}^1$ and $\p_2 p(4) \in \Pi_{\A}^2$. Table~\ref{tab:trenary_ex} shows the ternary $\sigma^\downarrow$-siblings. Then, in order to split  $\{\sigma^\downarrow(4, 1),\sigma^\downarrow(4, 2),\sigma^\downarrow(4, 3)\}$ between the posterior sets, we must assign $x_3$ to $A^1$. It is left to the reader to verify that any assignment to $x_1$ and $x_2$ does not end up with a posterior-respecting partition.
\end{example}

\begin{table}
	\captionsetup{justification=centering}
	\begin{center}
		\setlength\tabcolsep{4.9pt}
		\begin{tabular}{|c|c|c|c|}
			\hline
			$\sigma$ & 1 & 2 & 3 \\
			\hline
			1 & $\p_0 p(1) / .175$  & $\p_0 p(2) / .13$ & \cellcolor{yellow!25}$\p_0 p(3) / .12\in \Pi_A^1$\\
			\hline 
			2 & $\p_1 p(1) / .105$  & $\p_1 p(2) / .078$ & \cellcolor{blue!25}$\p_0 p(4) / .075\in \Pi_A^0$\\
			\hline
			3 & \cellcolor{red!25} $\p_1 p(3) / .072\in \Pi_A^2$  & $\p_2 p(1) / .07$ & $\p_2 p(2) / .052$\\
			\hline
			4 & \cellcolor{blue!25}$\p_2 p(3) / .048\in \Pi_A^0$  & \cellcolor{yellow!25}$\p_1 p(4) / .045\in \Pi_A^1$ & \cellcolor{red!25}$\p_2 p(4) / .03\in \Pi_A^2$\\
			\hline
		\end{tabular}
	\end{center}
	\caption{Table of $\sigma^\downarrow$-siblings generated by Example~\ref{ex:trenary_ex}. Each cell contains posterior probability expression/value pair. The coloring refers to the posterior sets.}
	\label{tab:trenary_ex}
\end{table}

\section{Discussion}\label{sec:discussion}
We have shown that the zigzag partition, which amounts to querying whether $X$ has an odd or even index when ordered in descending order of probabilities, is the best question Bob can ask Carole in order to uniformly minimize the expectation of any nondecreasing function of his guessing time, regardless of Carole's lying probability. This result is limited to the case of yes/no questions and a binary symmetric channel from Carole to Bob. Natural extensions of this problem are therefore 1) let Bob ask multiple-choice $M$-ary questions, i.e., to partition $[N]$ into $M$ sets $\{A^i\}_{i=0}^{M-1}$ and ask Carole which one contains $X$, and 2) consider more general channel models for Carole's noisy reply. 

We note that our proof of Theorem~\ref{thrm:main} is almost trivially extended to the $M$-ary case when the channel from Carole to Bob is modulo-additive with a uniform crossover probability, i.e., where Carole answer truthfully with probability $1-\p$, and gives any one of the other $M-1$ incorrect answers with probability $\frac{\p}{M-1}$. This setup reduces to the one discussed on this paper when $M=2$. For arbitrary $M$, the corresponding zigzag partition is the collection of disjoint subsets $\{A^i_{ZZ}\}_{i=0}^{M-1}$ given by   
	\begin{align}
		A^i_{ZZ} \triangleq\left\{k\in [N] : k \equiv i \pmod{M}\right\}.
	\end{align}
This choice is optimal and achieves the corresponding unconstrained optimum (just as~\eqref{eq:best_possible} is achieved in the $M=2$ case). 

Interestingly, our approach does not extend when replacing the special symmetric channel above with a general (symmetric!) modulo-additive channel; in fact, for such channels the unconstrained optimum cannot always be achieved (by any partition), and zigzag is not always optimal. The problem of exactly characterizing the optimal partition or even its performance in this setup appears to be hard. It is thus interesting to examine the applicability of the max-cut / quadratic relaxation approach of~\cite{burin2018reducing} to possibly obtain bounds. It is possible to transform a problem of testing whether there exists a solution to system of DMDs, to a problem of testing if a maximum independent set of a certain graph $G$ is of size $N$, where $N$ is the number of variables in the system. This alternative formulation may hopefully allow the use of graph-theoretic techniques to show interesting properties of specific bijections, such as the unconstrained optimum bijection $\sigma^\downarrow$. We construct $G$ in the following way. For each variable $w_k$, we add a clique of $M$ vertices, indexed $0, 1, \ldots, M-1$. For each equation $w_k - w_l \not\equiv c_i$, we add an edge between the vertex $c_i + m$ of $w_k$ and the vertex $m$ of $w_l$, for all $m \in [M]$. This ensures that if we take two nonadjacent (independent) vertices from these two cliques, then their indices will satisfy the DMD. Thus, if there is an independent set of size $N$ in $G$, then there is a solution to the system of DMDs. Any graph $G$ has the following property \cite{knuth1994sandwich}
\begin{align}
\alpha(G) \le \vartheta(G) \le \bar\chi(G),
\end{align}
where $\alpha(G)$ and $\bar\chi(G)$ are the independence number and the clique partition number, both of which are NP-hard to compute, and $\vartheta(G)$ is the Lov\'asz number which can be computed in polynomial time. For graph instances that represent a system of DMDs, $\alpha(G)$ and $\bar\chi(G)$ are bounded 
\begin{align}
\alpha(G) \le \bar\chi(G) \le N,
\end{align}
so, if $\vartheta(G) < N$, then there is no solution to the system of DMDs. Otherwise, as a result of what we have proven before, it is computationally NP-hard to test whether $\alpha(G) < N$ (this was also proved in \cite{busygin2006np}). This is not necessarily the case for the system of DMDs that is generated by $\sigma^\downarrow$, and it is an open question whether it remains NP-hard to test if the unconstrained optimum is achievable. For example, if the following property is true for the graph $G^\downarrow$ that is generated from $\sigma^\downarrow$
\begin{align}
\alpha(G^\downarrow) < N \Longrightarrow \vartheta(G^\downarrow) < N ,
\end{align}
it the would make the problem of testing whether the unconstrained optimum can be achieved solvable in polynomial time.

It is also interesting to go back to the binary case but consider an asymmetric channel model, i.e., where the crossover probability depends on the input. For this channel, we have derived a quadratic time algorithm for finding the optimal partition for the expected guessing time (not for a general function), in the case of ``sufficiently small'' crossover probabilities, satisfying  for all $i,j\in[N]$ 
	\begin{align}
		\epsilon p(i) &< \bar\delta p(j)\\
		\delta p(i) &< \bar\epsilon p(j), 
	\end{align}
where $\epsilon$ (resp. $\delta$) is the probability of crossing $0 \rightarrow 1$ (resp. $1 \rightarrow 0$). In this case, Bob's optimal strategy regardless of the partition he has used, is to first guess the values of $X$ from the set pointed out by Carole, and only then go over the values in the complement set (according to the posterior order). Given a partition $A$, let  $x_k^{A}$ (resp. $x_k^{\bar A}$) be the posterior order within $A$ (resp. $\bar A$), i.e. $p(x_k^{A}) \ge p(x_{k+1}^{A})$ (resp. $p(x_k^{\bar A}) \ge p(x_{k+1}^{\bar A})$). Then Bob's expected guessing time ($f(k)=k$) is given by
	\begin{align}
		G_A(X) &= \labeledunderbrace{y = 0}{\sum_{k=1}^{\abs{\bar A}} k \cdot \bar\epsilon p(x_k^{\bar A}) + \sum_{k=1}^{\abs{A}} (k + \abs{\bar A}) \cdot \delta p(x_k^A)} \\
		       &+ \labeledunderbrace{y = 1}{\sum_{k=1}^{\abs{A}} k \cdot \bar\delta p(x_k^{A}) + \sum_{k=1}^{\abs{\bar A}} (k + \abs{A}) \cdot \epsilon p(x_k^{\bar A})} \\
		       & = \sum_{k=1}^{\abs{\bar A}} \labeledunderbrace{\bar c_k}{(k + \epsilon\abs{A})} \cdot p(x_k^{\bar A}) + 
		       \sum_{k=1}^{\abs{A}} \labeledunderbrace{c_k}{(k + \delta \abs{\bar A})} \cdot p(x_k^A) \\
		       & = \sum_{k=1}^{\abs{\bar A}} \bar c_k \cdot p(x_k^{\bar A}) + \sum_{k=1}^{\abs{A}} c_k \cdot p(x_k^A).
		       \label{eq:asym_ga}
	\end{align}

Fixing the size of $A$, the coefficients $c_k$ and $\bar c_k$ in~\eqref{eq:asym_ga} are known. Ordering these $N$ coefficients in descending order, denoted by $d^\downarrow_k$, and noting that $p(x_k)$ is a nondecreasing function, we can appeal to Lemma~\ref{lem:rearrange} and obtain 
	\begin{align}
		G_A(X) \ge \sum_{k \in [N]} d^\downarrow_k \cdot p(x_k).
	\end{align}
We can easily achieve this bound by assigning $x_k$ to $A$ if and only if $d^\downarrow_k$ is in $\{c_k\}_{k=1}^{\abs{A}}$. By iterating over the size of $A$, it is possible to find the optimal partition in $O(N^2)$ steps ($N$ evaluations of $G_A(X)$).
	
\bibliography{Minimum_Guesswork_with_an_Unreliable_Oracle}
\bibliographystyle{IEEEtran}

\end{document}